\newtheorem{thm}{Theorem}
\newtheorem{lem}{Lemma}
\newtheorem{example}{Example}
\newtheorem{defn}{Definition}
\newtheorem{rem}{Remark}
\newtheorem{prop}{Proposition}
\def \Res  {{\rm Res}}
\def \MCproj  {{\rm MCproj}}
\def \discrim  {{\rm discrim}}
\def \sqrfree  {{\rm sqrf}}
\def \lc  {{\rm lc}}
\def \Bproj  {{\tt Bp}}
\def  \zero {{\rm Zero}}
\def  \Nproj {{\tt Np}}
\def  \Hproj {{\tt Hp}}
\def \Proineq {{\tt Proineq}}
\def \RR {{\mathbb R}}
\def \ZZ {{\mathbb Z}}
\newcommand{\va}{\bm{\alpha}}
\newcommand{\vb}{\bm{\beta}}
\newcommand{\xx}{\bm{x}}
\newcommand{\yy}{\bm{y}}
\newcommand{\zz}{\bm{z}}
\def \RAGlib {{\tt RAGlib}}
\def \FI {{\tt FI}}
\def \QEPCAD {{\tt QEPCAD}}
\def \PCAD {{\tt PCAD}}
\def \TwoPro {{\tt PSD-HpTwo}}
\def \TwoHp {{\tt HpTwo}}
\def \FI {{\tt FI}}
\def \QEPCAD {{\tt QEPCAD}}
\def \PCAD {{\tt PCAD}}
\newtheorem{ex}{\qquad Example}[section]
\begin{document}
\title{Constructing Fewer Open Cells by GCD Computation in CAD Projection}
\numberofauthors{3}

\author{
 \alignauthor Jingjun Han\\
 \affaddr{School of Mathematical Sciences \& Beijing International Center for Mathematical Research\\ Peking University}
 \email{hanjingjunfdfz@gmail.com}
 \alignauthor Liyun Dai\\
 \affaddr{School of Mathematical Sciences \& Beijing International Center for Mathematical Research\\ Peking University}
 \email{dailiyun@pku.edu.cn}
 \alignauthor Bican Xia\\
 \affaddr{LMAM \& School of Mathematical Sciences\\ Peking University}
 \email{xbc@math.pku.edu.cn}
   }

\maketitle

\begin{abstract}
A new projection operator based on cylindrical algebraic decomposition (CAD) is proposed. The new operator computes the intersection of projection factor sets produced by different CAD projection orders. In other words, it computes the gcd of projection polynomials in the same variables produced by different CAD projection orders. We prove that the new operator still guarantees obtaining at least one sample point from every connected component of the highest dimension, and therefore, can be used for testing semi-definiteness of polynomials. Although the complexity of the new method is still doubly exponential, in many cases, the new operator does produce smaller projection factor sets and fewer open cells. Some examples of testing semi-definiteness of polynomials, which are difficult to be solved by existing tools, have been worked out efficiently by our program based on the new method.

\end{abstract}
\category{G.4}{Mathematics of computation}{Mathematical software --- {\em Algorithm design and analysis}}
\terms{Algorithms}

\keywords{CAD projection, semi-definiteness, polynomial.}

\section{Introduction}
\label{secc:intro}
The cylindrical algebraic decomposition (CAD) method was first proposed by Collins \cite{collins1,Caviness1998}.
A key role in CAD algorithm is its projection operator. A well known improvement of CAD projection is Hong's projection operator \cite{Hong}. 
For many problems, a smaller projection operator given by McCallum in \cite{McCallum1,McCallum2}, with an improvement by Brown in \cite{brown}, is more efficient.

For CAD projection operators, different projection orders may lead to a great difference in complexity. Thus, in order to reduce the projection scale, it is meaningful to study the relationship between those different projection orders. For related work, see for example \cite{Dolzmann2004}.

The reason for such difference is mainly because the projection factors in the same variables produced by different projection orders may be different. For example, when we apply Brown's projection operator $\Bproj$ (see Definition \ref{def:brown projection}) to any given polynomial $f\in\ZZ[x_1,\ldots,x_n]$, it is quite possible that
$\Bproj(f,[x_{n},x_{n-1}])\neq \Bproj(f,[x_{n-1},x_{n}]).$

In this paper, we propose a new projection operator $\Hproj$ based on Brown's operator and gcd computation. The new operator computes the intersection of projection factor sets produced by different CAD projection orders. In other words, it computes the gcd of projection polynomials in the same variables produced by different CAD projection orders. In some sense, the polynomial in the projection factor sets of $\Hproj$ is irrelevant to the projection orders. We prove that the new operator still guarantees obtaining at least one sample point from every connected component of the highest dimension, and therefore, can be used for testing semi-definiteness of polynomials.

It should be mentioned that there are some non-CAD methods for computing sample points in semi-algebraic sets, such as critical point method. For related work, see for example, \cite{basu1998new,safey2003polar,el2013critical}. Mohab Safey el Din developed a Maple package \RAGlib \footnote{http://www-polsys.lip6.fr/\~{}safey/RAGLib/distrib.html} based on their work, which can test semi-definiteness of polynomials.

The new projection operator provides a possible faster way based on CAD to test semi-definiteness of polynomials in practice though the complexity of the new method is still doubly exponential. In many cases, the new projection operator produces smaller projection factor sets and thus fewer open cells than existing CAD based algorithms such as GCAD \cite{Strzebonski}. Indeed, some examples that could not be solved by existing CAD based tools have been worked out efficiently by our program based on the new operator.

The structure of this paper is as follows. In Section 2, a simple example illustrates the main idea and steps of the new projection operator \Hproj. Section 3 introduces basic definitions, lemmas and concepts of CAD. In Section 4,
the new projection operator \Hproj\ is defined and a new algorithm based on \Hproj\ is proposed. Our main result (Theorem \ref{thm:open delineable}) is proved.
In Section 5, we prove that it is valid if we replace \Bproj\ with \Hproj\ in some steps of the projection phase of the simplified CAD projection \Nproj\ we proposed recently for inequality proving in \cite{han2012}. Section 6 includes several examples which demonstrate the effectiveness of our algorithms. We conclude the paper in Section 7. 

\section{Main idea}

Let us show the comparison of our new operator and Brown's projection operator on the following simple example. Formal description and proofs of our main results are given in subsequent sections.

\begin{example}\label{ex:1}
Let $f=x^4-2x^2y^2+2x^2z^2+y^4-2y^2z^2+z^4+2x^2+2y^2-4z^2-4\in \ZZ[x,y,z].$

We first compute an open CAD (see Definition \ref{def:opencad}) defined by $f\ne 0$ in $\RR^3$ by Brown's operator.
Take the order $z\succ y \succ x$. Step 1, compute the projection polynomial $($up to a nonzero constant$)$
\[\begin{array}{rl}
f_{z} & =\Res(\sqrfree(f),\frac{\partial}{\partial z}\sqrfree(f),z)\\ \medskip
& =(x^4-2x^2y^2+y^4+2x^2+2y^2-4)(3x^2-y^2-4)^2
\end{array}\]
where ``{\em Res}" means the Sylvester resultant and ``{\em sqrf}" means ``squarefree" that is defined in Definition \ref{de:sqrfree}.

Step 2, compute the projection polynomial $($up to a nonzero constant$)$
\[\begin{array}{rl}
f_{zy} & =\Res(\sqrfree(f_z),\frac{\partial}{\partial y}\sqrfree(f_z),y)\\ \medskip
   & = (3x^2-4)(x^4+2x^2-4)(4x^2-5)^2(x-1)^8(x+1)^8
\end{array}\]
which has $8$
distinct real zeros.

Step 3, by open CAD lifting under the order $z\succ y \succ x$ and using the projection factor set $\{f_{zy}, f_z, f\}$, we will finally get $113$ sample points of $f\ne 0$ in $\RR^3$.

Now, we compute a {\em reduced open CAD} (see Algorithm \ref{de:reducedopencad}) defined by $f$ in $\RR^3$ by the new projection operator proposed in this paper.
Step 1, take the order $z\succ y \succ x$ and compute the projection polynomial $f_{zy}$ as above. Step 2, take another order $y\succ z \succ x$ and we can similarly obtain a projection polynomial $($up to a constant$)$
\[f_{yz}=(3x^2-4)^2(x^4+2x^2-4)(4x^2-5)(6x^2-7)^8.\]
Step 3, compute \[\gcd(f_{yz},f_{zy})=(3x^2-4)(x^4+2x^2-4)(4x^2-5)\] which has $6$ distinct real zeros.

Step 4, by open CAD lifting under the order $z\succ y \succ x$ and using the projection factor set $\{\gcd(f_{yz},f_{zy}), f_z, f\}$, we will finally get $87$ sample points of $f\ne 0$ in $\RR^3$.
\end{example}

\begin{rem}
The main result of this paper is Theorem \ref{thm:open delineable} which guarantees that the new projection operator can obtain at least one sample point from every connected component of the highest dimension. 

Intuitively, let us take for example three open intervals $S_1=(-a,-1)$, $S_2=(-1,1)$ and $S_3=(1,a)$ where $a=\sqrt{\sqrt{5}-1}$ and $-a,-1,1,a$ are four consecutive roots of $f_{zy}$. By typical CAD methods, $S_1, S_2, S_3$ are three open cells of $x$-axis. Note that for any open connected set $U$ of $\RR^3$ defined by $f\ne 0$ and any two points $x_1, x_2\in S_1\cup S_2\cup S_3$, we have $(\{x_1\}\times\RR^2)\cap U\ne \emptyset\Longleftrightarrow$ $(\{x_2\}\times\RR^2)\cap U\ne\emptyset$. So, by this observation, we only need to consider one open cell $(-a,a)$. For details, please see Remark \ref{rem:union}.
\end{rem}

\begin{rem}
Computing projection factor sets $($polynomials$)$ under different projection orders brings additional costs compared to traditional CAD projection operators. However, it has two gains. First, it produces fewer sample points $($representing open cells$)$ in many cases as shown in Example \ref{ex:1}. Second, the most important thing is, if the number of variables is greater than $3$, it may also reduce the scale of projection. Please see Definition \ref{def:hp}, Algorithm \ref{TwoHp}, Remark \ref{re:a1} and Remark \ref{re:a2} for details.
\end{rem}

\section{Preliminaries}
\label{sec:pre}
If not specified, for a positive integer $n$, let $\xx_n$ be the set of variable $\{x_1,\dots,x_n\}$ and $\va_n$ and $\vb_n$ denote the point $(\alpha_1,\dots,\alpha_n)\in \RR^n$ and $(\beta_1,\dots,\beta_n)\in \RR^n$, respectively.
\begin{defn}
Let $f\in \ZZ[\xx_n]$, denote by $\lc(f,x_i)$ and $\discrim(f,x_i)$ the {\em leading coefficient} and the {\em discriminant} of $f$ with respect to (w.r.t.) $x_i$, respectively. 
The set of real zeros of $f$ is denoted by $\zero(f)$. Denote by $\zero(L)$ or $\zero(f_1,\ldots,f_m)$ the common real zeros of $L=\{f_1,\ldots,f_m\}\subset \ZZ[\xx_n].$ The {\em level} for $f$ is the biggest $j$ such that $\deg(f,{x_j})>0$ where $\deg(f,x_j)$ is the degree of $f$ w.r.t. $x_j$. For polynomial set $L\subseteq \ZZ[\xx_n]$, $L^{[i]}$ is the set of polynomials in $L$ with level $i$.
\end{defn} 

\begin{defn}
	Let $P_{n}$ be the symmetric permutation group of $x_1,\ldots,x_n$. Define $P_{n,i}$ to be the subgroup of $P_{n}$, where any element $\sigma$ of $P_{n,i}$ fixes $x_1,\ldots,x_{i-1}$, i.e., $\sigma(x_j)=x_j$ for $j=1,\ldots,i-1$.
\end{defn}

\begin{defn}\label{de:sqrfree}
	If $h$$\in \ZZ[\xx_n]$ can be factorized in $\ZZ[\xx_n]$ as:
	$$h=al_{1}^{2j_1-1}\cdots l_t^{2j_t-1}{h_1}^{2i_1}\cdots {h_m}^{2i_m},$$
	where $a\in \ZZ$, $t\ge 0, m\ge 0$, $l_j$$(i=1,\ldots,t)$ and $h_i$$(i=1,\ldots,m)$ are
pairwise different irreducible primitive polynomials with positive leading coefficients $($under a suitable ordering$)$ and positive degrees in $\ZZ[\xx_n]$, then define
	\begin{align*}
        &\sqrfree(h)=l_{1}\cdots l_t{h_1}\cdots {h_m},\\ 
		&\sqrfree_1(h)=\{l_i,i=1,2,\ldots,t\},\\
		&\sqrfree_2(h)=\{h_i,i=1,2,\ldots,m\}.
	\end{align*}
	If $h$ is a constant, let $\sqrfree(h)=1,$ $\sqrfree_1(h)=\sqrfree_2(h)=\{1\}.$
\end{defn}

In the following, we introduce some basic concepts and results of CAD. The reader is referred to \cite{collins1,Hong,collins1991partial,McCallum1,McCallum2,brown} for a detailed discussion on the properties of CAD.

\begin{defn}$\cite{collins1,McCallum1}$
	An $n$-variate polynomial $f(\xx_{n-1},x_{n})$ over the reals is said to be {\em delineable} on a subset $S$ (usually connected) of $\RR^{n-1}$ if
	(1) the portion of the real variety of $f$ that lies in the cylinder $S\times \RR$ over $S$ consists of the union of the graphs of some $t\ge0$ continuous functions $\theta_1<\cdots<\theta_t$ from $S$ to $\RR$; and
	(2) there exist integers $m_1,\ldots,m_t\ge1$ s.t. for every $a\in S$, the multiplicity of the root $\theta_i(a)$ of $f(a,x_n)$ (considered as a polynomial in $x_n$ alone) is $m_i$.
\end{defn}

\begin{defn}$\cite{collins1,McCallum1}$
	In the above definition, the $\theta_i$ are called the real root functions of $f$ on $S$, the graphs of the $\theta_i$ are called the $f$-{\em sections} over $S$, and the regions between successive $f$-sections are called $f$-{\em sectors}.

\end{defn}

\begin{thm}\label{thm:McCallum}$\cite{McCallum1,McCallum2}$
	Let $f(\xx_n,x_{n+1})$ be a polynomial in $\ZZ[\xx_n,x_{n+1}]$ of positive degree and $\discrim(f,x_{n+1})$ is a nonzero polynomial. Let $S$ be a connected submanifold of $\RR^n$ on which $f$ is degree-invariant and does not vanish identically, and in which $\discrim(f,x_{n+1})$ is order-invariant. Then $f$ is analytic delineable on $S$ and is order-invariant in each $f$-section over $S$.
\end{thm}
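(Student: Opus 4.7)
The plan is to combine degree-invariance of $f$ with order-invariance of $\discrim(f,x_{n+1})$ through the implicit function theorem, first producing the continuous real-root functions $\theta_1<\cdots<\theta_t$ and then controlling their multiplicities. The argument naturally splits into the case where the discriminant is nowhere zero on $S$ (easy) and the case where it vanishes to positive constant order on $S$ (hard).

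Since $f$ is degree-invariant on the connected manifold $S$, its leading coefficient $\lc(f,x_{n+1})$ is nowhere zero on $S$, so each $f_a(x_{n+1}):=f(a,x_{n+1})$ has the same number $d$ of complex roots counted with multiplicity; together with the hypothesis that $\discrim(f,x_{n+1})$ is a nonzero polynomial in $\ZZ[\xx_n]$, this means $f$ has no repeated factor in $x_{n+1}$. Order-invariance of $\discrim(f,x_{n+1})$ on the connected $S$ means its order of vanishing is a fixed integer $k$ everywhere on $S$. If $k=0$, every $f_a$ has only simple roots, so $\partial f/\partial x_{n+1}$ is nowhere zero on the zero set of $f$ over $S$. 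The implicit function theorem then produces, around any $a_0\in S$ and any complex root $\zeta_0$ of $f_{a_0}$, a unique analytic branch $\zeta(a)$ near $a_0$ with $f(a,\zeta(a))=0$; these branches admit single-valued global analytic continuations on the connected $S$ because any collision would force $\discrim$ to vanish. Restricting to real branches gives $\theta_1<\cdots<\theta_t$, and order-invariance in each section is automatic since all $m_i=1$.

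For $k\ge 1$ some complex roots collide on $S$ even though $f$ itself has no repeated factor. I would work locally by expanding the roots $\zeta_1(a),\dots,\zeta_d(a)$ of $f_a$ as Puiseux series in analytic coordinates on $S$ centered at a chosen $a_0\in S$, and exploit the identity
\[\discrim(f,x_{n+1})=\lc(f,x_{n+1})^{2d-2}\prod_{i<j}(\zeta_i-\zeta_j)^2,\]
whose order of vanishing at $a_0$ equals twice the sum of valuations of the pairwise differences. The set of pairs $(i,j)$ for which $\zeta_i-\zeta_j$ vanishes identically on $S$ is a well-defined subset of $\{1,\dots,d\}^2$, and constancy of $k$ forces this subset to be the same at every point of $S$; otherwise some pair's valuation would jump from finite to infinite, changing $k$. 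The induced collision classes are exactly the preimages of the real root functions $\theta_i$, and their sizes $m_i$ are the multiplicities in each section, necessarily independent of $a$.

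The main obstacle is that last step: translating the scalar invariant ``order of $\discrim$ is $k$'' into the combinatorial fact that the partition of the $d$ branches into collision classes is the same at every point of $S$. This requires simultaneously comparing local Puiseux parameterizations at different points of $S$, which is where one genuinely needs $S$ to be an analytic submanifold rather than merely a connected set; this semicontinuity-of-collision-partitions step is the technical heart of McCallum's original proof and is what prevents a purely elementary treatment.
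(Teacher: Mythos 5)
The paper does not prove this theorem; it quotes it verbatim from McCallum \cite{McCallum1,McCallum2} and uses it, together with Theorem \ref{thm:Brown}, purely as a black box to justify the delineability facts underlying Propositions \ref{prop:open} and \ref{pr:6}. So there is no in-paper proof for your attempt to be compared against, and any judgment has to be made against McCallum's original argument.

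On that footing, your sketch reproduces the standard architecture (implicit function theorem plus analytic continuation when the discriminant is nowhere zero; a multiplicity-tracking argument when it vanishes to constant positive order), and you correctly flag where the real work is: showing the collision pattern of the $d$ complex branches is the same at every point of $S$. But the hard case as you present it has genuine holes. Puiseux expansions are one-parameter devices; on a submanifold $S$ with $\dim S\ge 2$ there is no single Puiseux parameterization, and one must instead work via Weierstrass preparation and factorization of $f$ in the local analytic ring at each $a_0\in S$. Also, ``the set of pairs $(i,j)$ for which $\zeta_i-\zeta_j$ vanishes identically on $S$'' is not a coherent invariant: two analytic branches that coincide on an open set are one and the same function, so that set is empty unless you are silently counting roots of $f_a$ with multiplicity, in which case what you actually need is that the partition of $\{1,\dots,d\}$ by coincidence of root values is locally constant in $a$. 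Deducing that from constancy of the \emph{order} of $\discrim(f,x_{n+1})$ is precisely the nontrivial lemma at the heart of McCallum's proof; your sketch names the destination but does not supply the lemma, so it remains an outline rather than a proof.
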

Based on this theorem, McCallum proposed the projection operator \MCproj, which consists of the discriminant of $f$ and all coefficients of $f$.

\begin{thm}\label{thm:Brown}$\cite{brown}$
	Let $f(\xx_n,x_{n+1})$ be a $(n+1)$-variate polynomial of positive degree 
in $x_{n+1}$ such that $\discrim(f,x_{n+1})$ $\neq0$. Let $S$ be a connected submanifold of $\RR^n$ in which $\discrim(f,x_{n+1})$ is order-invariant, the leading coefficient of $f$ is sign-invariant, and such that $f$ vanishes identically at no point in $S$. $f$ is degree-invariant on $S$.
\end{thm}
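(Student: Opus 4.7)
The plan is to argue by cases based on the behavior of $\lc(f,x_{n+1})$ on the connected set $S$, handling the nontrivial case via a Sylvester-matrix identity and an inductive argument on truncations of $f$.

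First, since $\lc(f,x_{n+1})$ is sign-invariant on the connected $S$, it is either nowhere zero on $S$ or identically zero on $S$. In the nowhere-zero subcase the actual degree of $f(\va,x_{n+1})$ in $x_{n+1}$ equals $\deg(f,x_{n+1})$ for every $\va\in S$ and degree-invariance is immediate. The substantive subcase is when $c_d:=\lc(f,x_{n+1})\equiv 0$ on $S$; writing $f=\sum_{i=0}^{d}c_i(\xx_n)x_{n+1}^i$ and $\tilde{f}=f-c_d\,x_{n+1}^d$, we have $f=\tilde{f}$ at every point of $S$, so the goal reduces to showing that the degree of $\tilde{f}(\va,\cdot)$ in $x_{n+1}$ is constant in $\va\in S$.

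Next, I would establish an identity of the form
$$\discrim(f,x_{n+1})|_{c_d=0}=\pm\,c_{d-1}^{\,2}\,\discrim(\tilde{f},x_{n+1}),$$
by combining the standard relation $c_d\cdot\discrim(f,x_{n+1})=\pm\Res_{x_{n+1}}(f,\partial f/\partial x_{n+1})$ with an expansion of the Sylvester matrix of $f$ and $\partial f/\partial x_{n+1}$ along its first column, whose entries are multiples of $c_d$. Restricting to $S$, this reads $\discrim(f,x_{n+1})|_S=\pm c_{d-1}^{\,2}\,\discrim(\tilde{f},x_{n+1})|_S$. Because $\discrim(f,x_{n+1})$ is order-invariant (hence sign-invariant) on $S$ and because the hypothesis that $f(\va,\cdot)\not\equiv 0$ for every $\va\in S$ prevents $\tilde{f}$ from vanishing identically at any point of $S$, this identity forces $c_{d-1}=\lc(\tilde{f},x_{n+1})$ to itself be sign-invariant on $S$.

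One now iterates: if $c_{d-1}$ is nowhere zero on $S$, the effective degree of $f$ in $x_{n+1}$ is constantly $d-1$ on $S$; if $c_{d-1}\equiv 0$ on $S$, replace $f$ by $\tilde{f}$ (of formal degree $d-1$) and repeat, noting that by the identity above the lower-order discriminant is a factor of $\discrim(f,x_{n+1})$ and hence still order-invariant on $S$. Induction on $d$ terminates when the truncation becomes constant in $x_{n+1}$, at which point the hypothesis $f(\va,\cdot)\not\equiv 0$ forces the surviving constant term to be nowhere zero on $S$. The main obstacle is precisely this propagation step: extracting sign-invariance of the successive intermediate coefficients $c_{d-1},c_{d-2},\ldots$ from the single assumption that $\discrim(f,x_{n+1})$ is order-invariant on $S$ is the essential content of Brown's improvement over McCallum's projection operator, for which sign-invariance of all coefficients had to be assumed as hypothesis.
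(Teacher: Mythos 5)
Note first that this theorem is cited from Brown's 2001 paper and the present paper does not contain a proof of it, so there is no ``in-paper'' proof to compare against; I will assess your reconstruction on its own terms. Your high-level strategy does match the core of Brown's argument: the dichotomy from sign-invariance on a connected set ($c_d$ is either nowhere zero, giving the trivial case, or identically zero on $S$), and the classical specialization identity $\discrim(f,x_{n+1})\big|_{c_d=0}=\pm\,c_{d-1}^{\,2}\,\discrim(\tilde f,x_{n+1})$ are exactly the right tools, and you state the identity correctly (the exponent $2$ is right, as one can verify on quadratics and cubics).

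However, the propagation step has two genuine gaps. First, the claim that the identity ``forces $c_{d-1}$ to be sign-invariant on $S$'' does not follow from the stated hypotheses. Order-invariance of $\discrim(f,x_{n+1})$ on $S$ only tells you that $\discrim(f)\big|_S$ is either nowhere zero on $S$ or identically zero on $S$. In the first alternative you do indeed get $c_{d-1}$ nowhere zero on $S$ (and then you are done immediately, with no iteration needed: the degree is $d-1$ throughout). But in the second alternative you only get $c_{d-1}^{\,2}\,\discrim(\tilde f)\big|_S\equiv 0$, which is perfectly consistent with $c_{d-1}$ vanishing at some points of $S$ and not at others, provided $\discrim(\tilde f)$ vanishes on the complementary set; nothing in your argument rules this out. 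Second, the assertion that ``the lower-order discriminant is a factor of $\discrim(f,x_{n+1})$ and hence still order-invariant on $S$'' is wrong on both counts. The identity is a congruence modulo $c_d$ (a specialization at $c_d=0$), not a factorization; e.g., for a cubic $ax^3+bx^2+cx+d$ one checks directly that $c^2-4bd$ does not divide $\discrim(f)$. And even if $\discrim(\tilde f)$ were a factor of $\discrim(f)$, order-invariance of a product on a set does not descend to its factors, so the inductive hypothesis on $\tilde f$ would still be unsupported. To close these gaps you would need a finer analysis of vanishing orders at points of $S$ using the full identity $\discrim(f)=\pm\,c_{d-1}^{\,2}\,\discrim(\tilde f)+c_d\cdot h$ (tracking how $\mathrm{ord}_p$ distributes over the two summands as $p$ varies in $S$), which is precisely the nontrivial content of Brown's theorem; asserting the conclusion of that analysis is not the same as carrying it out.
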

Based on this theorem, Brown obtained a reduced McCallum projection in which only leading coefficients, discriminants and resultants appear. The Brown projection operator is defined as follows.
\begin{defn} \label{def:brown projection}$\cite{brown}$
	Given a polynomial $f\in \ZZ[\xx_n]$, if $f$ is with level $n$,
	the Brown projection operator for $f$ is
	$$\Bproj(f,[x_{n}])=\Res(\sqrfree(f),\frac{\partial (\sqrfree(f))}{\partial x_{n}}, x_{n}).$$
	Otherwise $\Bproj(f,[x_{n}])=f$.
	If $L$ is a polynomial set with level $n$, then
	\begin{align*}
		\Bproj(L,[x_{n}])=&\bigcup_{f\in L}\{\Res(\sqrfree(f),\frac{\partial (\sqrfree(f))}{\partial x_{n}}, x_{n})\}\\
		&\bigcup_{f,g\in L, f\neq g}\{\Res(\sqrfree(f),\sqrfree(g),x_{n})\}.
	\end{align*}
	Define
	\begin{align*}
		&\Bproj(f,[x_{n},x_{n-1},\ldots, x_i])\\
		=&\Bproj(\Bproj(f,[x_{n},x_{n-1},\ldots,x_{i+1}]),[x_i]).
	\end{align*}
\end{defn}

The following definition of {\em open CAD} is essentially the GCAD introduced in \cite{Strzebonski}. For convenience, we use the terminology of open CAD in this paper.

\begin{defn} $($Open CAD$)$ \label{def:opencad}
	For a polynomial $f(\xx_n)\in \ZZ[\xx_n]$, an open CAD defined by $f(\xx_n)$ is a set of sample points in $\RR^n$ obtained through the following three phases:\\
	(1) Projection. Use the Brown projection operator on $f(\xx_n)$, let $F=\{f, \Bproj(f,[x_n]),\dots, \Bproj(f,[x_n,\dots,x_2])\}$;\\ 
	(2) Base. Choose one rational point in each of the open intervals defined by the real roots of $F^{[1]}$;\\ 
	(3) Lifting. Substitute each sample point of $\RR^{i-1}$ for $\xx_{i-1}$ in $F^{[i]}$ to get a univariate polynomial $F_{i}(x_i)$ and then, by the same method as Base phase, choose sample points for $F_{i}(x_i)$. Repeat the process for $i$ from $2$ to $n$.
\end{defn}

\section{Reduced Open CAD}\label{sec:refined}

\begin{defn} $($Open sample$)$
A set of sample points $T_f\subseteq \RR^k$ is said to be an {\em open sample} defined by $f(\xx_k)\in\ZZ[\xx_k]$ in $\RR^k$ if it has the following property: for every open connected set $U\subseteq \RR^k$ defined by $f\neq0$, $T_f\cap U\ne \emptyset$.

Suppose $g(\xx_k)$ is another polynomial. If $T_f$ is an open sample defined by $f(\xx_k)$ in $\RR^k$ such that $g(\va)\neq0$ for any $\va\in T_f$, then we denote the open sample by $T_{f,g\ne0}$. 
\end{defn}

As a corollary of Theorems \ref{thm:McCallum} and \ref{thm:Brown}, a property of open CAD (or GCAD) is that at least one sample point can be taken from every highest dimensional cell via the open CAD (or GCAD) lifting phase.
So, an open CAD is indeed an open sample.

Obviously, there are various ways to compute $T_{f,g\ne0}$ for two given univariate polynomials $f,g \in \ZZ[x]$. Therefore, we only describe the specification of such algorithms here and omit the details of the algorithms.

\begin{algorithm}
	\caption{{\tt SPOne}} \label{one-dimsample}
	\begin{algorithmic}[1]
		\REQUIRE{Two univariate polynomials $f,g \in \ZZ[x]$}
		\ENSURE{$T_{f,g\ne0}$, an open sample defined by $f(x)$ in $\RR$ such that $g(\va)\neq0$ for any $\va\in T_{f,g\ne0}$}
       	\end{algorithmic}
\end{algorithm}

\begin{algorithm}
	\caption{{\tt OpenSP}} \label{opensample}
	\begin{algorithmic}[1]
		\REQUIRE{Two lists of polynomials $L_1=[f_n(\xx_n), \ldots, f_j(\xx_j)]$, $L_2=[g_n(\xx_n), \ldots, g_j(\xx_j)]$, and a set of points $T$ in $\RR^{j}$}
		\ENSURE{A set of sample points in $\RR^{n}$}
        \STATE $O:=T$
        \FOR {$i$ from $j+1$ to $n$}
        \STATE $P:=\emptyset$
        \FOR {$\va$ in $O$}
        \STATE $P:=P\bigcup (\va,{\tt SPOne}(f_i(\va,x_i),g_i(\va,x_i)))$
        \ENDFOR
        \STATE $O:=P$
        \ENDFOR
        \RETURN $O$
	\end{algorithmic}
\end{algorithm}

\begin{rem}
The output of ${\tt OpenSP}(L_1,L_2,T)$ is dependent on the method of choosing sample points in Algorithm $\tt SPOne$. In the following, when we use the terminology ``any ${\tt OpenSP}(L_1,L_2,T)$'', we mean ``no matter which method is used in Algorithm $\tt SPOne$ for choosing sample points".
\end{rem}

\begin{rem}
For a polynomial $f(\xx_n)\in \ZZ[\xx_n]$,
let
$B_1=[f, \Bproj(f,[x_n]),\dots, \Bproj(f,[x_n,\dots,x_2])]$, $B_2=[1,\ldots,1],$ and $T={\tt SPOne}(\Bproj(f,[x_n,\dots,x_2]),1)$,
then
${\tt OpenSP}(B_1,B_2,T)$
is an open CAD (an open sample) defined by $f(\xx_n)$.

We will provide in this section a method which computes two lists $C_1$ and $C_2$ where the polynomials in $C_1$ are factors of corresponding polynomials in $B_1$ and will prove that any ${\tt OpenSP}(C_1,C_2,T_{f_j,g_j\ne 0})$ is an open sample of $\RR^n$ defined by $f(\xx_n)$ for any open sample $T_{f_j, g_j\ne 0}$ in $\RR^j$ where $f_j\in C_1$ and $g_j\in C_2$.
\end{rem}

\begin{defn}\label{def:weakopendeli} $($Weak open delineable$)$
Let $S'$ be an open set of $\RR^j$. The polynomial $f_n(\xx_n)$ is said to be {\em weak open delineable} on $S'$ if, for any maximal open connected set $U\subset \RR^n$ defined by $f_n(\xx_n)\neq0$, we have
$(S'\times \RR^{n-j})\bigcap U\ne \emptyset \Longleftrightarrow \forall\va\in S', (\va\times \RR^{n-j})\bigcap U\ne \emptyset.$
\end{defn}

\begin{rem}
Let $S\subset \RR^{n-1}$ be a maximal open connected set and suppose $f_n$ is analytic delineable on $S$. It is clear that $f_n$ is weak open delineable on $S$. In some sense, analytic delineablility is stronger than weak open delineability. For example, let $f(x,y)=(x^2+y^2-1)(x^2+y^2)$, then $f$ is obviously weak open delineable on $(-1,1)$ of $x$-axis by Definition \ref{def:weakopendeli}, but $f$ is not analytic delineable on $(-1,1)$ because $f(0,y)$ has three different real roots, while $f(x,y)=0$ has only two different real roots when $x\neq0$ and $x\in (-1,1)$.
\end{rem}

\begin{defn}\label{def:opendeli} $($Open delineable$)$
Let
	\begin{align}
        & L_1=[f_n(\xx_n),f_{n-1}(\xx_{n-1}), \ldots,f_j(\xx_j)], \label{def:L1} \\
		& L_2=[g_n(\xx_n), g_{n-1}(\xx_{n-1}), \ldots,g_j(\xx_j)] \label{def:L2}
	\end{align}
be two polynomial lists, $S$ an open set of $\RR^s$ $(s\le j)$ and $S'=S\times \RR^{j-s}$. The polynomial $f_n(\xx_n)$ is said to be {\em open delineable} on $S$ w.r.t. $L_1$ and $L_2$, if $\mathcal{A}\bigcap U\ne \emptyset$ for any maximal open connected set $U\subset \RR^n$ defined by $f_n\neq0$ with $U\bigcap (S'\times \RR^{n-j})\ne \emptyset$ and any $\mathcal{A}={\tt OpenSP}(L_1,L_2,\{\va\})$ where $\va\in S'$ is any point such that $f_j(\va)g_j(\va)\ne 0$.
\end{defn}

\begin{rem}
Let $s=j$ in Definition \ref{def:opendeli}, it could be shown that if $f_n(\xx_n)$ is open delineable on $S'$ w.r.t. $L_1$ and $L_2$, then $f_n(\xx_n)$ is weak open delineable on $S'\backslash\zero\{f_jg_j\}$.

Suppose $f_n(\xx_n)$ is a squarefree polynomial in $\ZZ[\xx_n]$ of positive degree and $S\subset \RR^{n-1}$ is an open connected set in which $\Bproj(f_n,[x_{n}])$ is sign-invariant. According to Theorem \ref{thm:McCallum} and Theorem \ref{thm:Brown}, $f_n$ is analytic delineable on $S$. It is easy to see that $f_n$ is open delineable on $S$ w.r.t. $[f_n,\Bproj(f_n,[x_{n}])]$ and $[f_n,\Bproj(f_n,[x_{n}])]$.
\end{rem}

Open delineability has the following four properties. 
\begin{prop} $($open sample property$)$   \label{prop:sample}
Let $L_1, L_2$ be as in Definition \ref{def:opendeli}.
	If $f_n(\xx_n)$ is open delineable on every open connected set of $f_j(\xx_j)\neq0$ w.r.t. $L_1$ and $L_2$, then for any open sample $T_{f_j,g_j\ne 0}$ in $\RR^j$, any $\mathcal{A}={\tt OpenSP}(L_1,L_2,T_{f_j,g_j\ne 0})$ is an open sample defined by $f_n(\xx_n)$ in $\RR^n.$
\end{prop}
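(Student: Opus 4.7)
The plan is to reduce the statement to a single application of the open delineability hypothesis, after locating a suitable base-level sample point in $T_{f_j,g_j\ne 0}$. Fix any maximal open connected component $U\subset\RR^n$ of $\{\xx_n\in\RR^n:f_n(\xx_n)\ne 0\}$; I must show $\mathcal{A}\cap U\ne\emptyset$ for every admissible set of choices in ${\tt SPOne}$ that produce $\mathcal{A}$.

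First I would push $U$ down to level $j$ via the coordinate projection $\pi:\RR^n\to\RR^j$, $(x_1,\dots,x_n)\mapsto(x_1,\dots,x_j)$. Since $U$ is open and nonempty, so is $\pi(U)$, and because $f_j\not\equiv 0$ the set $\zero(f_j)$ is nowhere dense in $\RR^j$. Hence $\pi(U)\setminus\zero(f_j)$ is a nonempty open subset of $\RR^j$; pick $\vb$ in it and let $S$ be the maximal open connected component of $\{\xx_j\in\RR^j:f_j(\xx_j)\ne 0\}$ containing $\vb$. Then $U\cap(S\times\RR^{n-j})\ne\emptyset$, since this set contains any point of $\pi^{-1}(\vb)\cap U$.

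Next I would invoke the defining property of $T_{f_j,g_j\ne 0}$: as an open sample defined by $f_j$ in $\RR^j$, it meets the maximal open connected component $S$ of $\{f_j\ne 0\}$. Pick $\va\in T_{f_j,g_j\ne 0}\cap S$; by the very definition of $T_{f_j,g_j\ne 0}$ we automatically have $f_j(\va)g_j(\va)\ne 0$. The hypothesis of the proposition then applies verbatim: $f_n$ is open delineable on $S$ w.r.t.\ $L_1,L_2$, and $U$ is a maximal open connected component of $\{f_n\ne 0\}$ with $U\cap(S\times\RR^{n-j})\ne\emptyset$, so Definition~\ref{def:opendeli} yields ${\tt OpenSP}(L_1,L_2,\{\va\})\cap U\ne\emptyset$.

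Finally I would read off Algorithm~\ref{opensample} the elementary identity ${\tt OpenSP}(L_1,L_2,T)=\bigcup_{\vb'\in T}{\tt OpenSP}(L_1,L_2,\{\vb'\})$, valid whenever the same ${\tt SPOne}$ choices are used throughout; since $\va\in T_{f_j,g_j\ne 0}$, this gives ${\tt OpenSP}(L_1,L_2,\{\va\})\subseteq\mathcal{A}$, and hence $\mathcal{A}\cap U\ne\emptyset$. The only real subtlety I anticipate is the bookkeeping around the choice of $S$: one must ensure that $S$ actually meets $\pi(U)$, so that the open delineability hypothesis is triggered for the particular pair $(U,S)$. This is precisely what the initial choice of $\vb\in\pi(U)\setminus\zero(f_j)$ secures; every other step is a direct invocation of the relevant definition.
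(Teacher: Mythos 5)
Your proof is correct and follows essentially the same route as the paper's: locate a connected component $S$ of $\{f_j\ne 0\}$ with $U\cap(S\times\RR^{n-j})\ne\emptyset$, use the open-sample property of $T_{f_j,g_j\ne 0}$ to find $\va\in T_{f_j,g_j\ne 0}\cap S$, and invoke open delineability. You merely flesh out the steps the paper leaves implicit — the existence of $S$ (via the nowhere-dense argument), and the unfolding of ${\tt OpenSP}$ as a union over base points.
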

\begin{proof}
For any open connected set $U\subset \RR^n$ defined by $f_n\neq0$, there exists at least one open connected set $S\subset \RR^j$ defined by $f_j\neq0$ such that $U\bigcap (S\times \RR^{n-j})\ne \emptyset$. Since $f_n$ is open delineable on $S$ w.r.t. $L_1$ and $L_2$, we have $\mathcal{A}\bigcap U\ne \emptyset$ for any $\mathcal{A}={\tt OpenSP}(L_1,L_2,T_{f_j,g_j\ne 0})$.
\end{proof}

\begin{prop} $($transitive property$)$ \label{prop:transitive}
Let $L_1, L_2, S, S'$ be as in Definition \ref{def:opendeli}. Suppose that there exists $k (j\le k\le n)$ such that $f_k(\xx_k)$ is open delineable on $S$ w.r.t. $L_1''=[f_k(\xx_k), \ldots, f_j(\xx_j)]$ and $L_2''=[g_k(\xx_k), \ldots, g_j(\xx_j)]$, and $f_n(\xx_n)$ is open delineable on every open connected set of $f_k(\xx_k)\neq0$ w.r.t. $L_1'=[f_n(\xx_n), \ldots, f_k(\xx_k)]$ and $L_2'=[g_n(\xx_n), \ldots, g_k(\xx_k)]$. Then $f_n(\xx_n)$ is open delineable on $S$ w.r.t. $L_1$ and $L_2$.
\end{prop}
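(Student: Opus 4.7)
The plan is to exploit the level-by-level structure of ${\tt OpenSP}$: lifting from level $j$ up to $n$ factors through level $k$. More precisely, for any $\va\in S'$ with $f_j(\va)g_j(\va)\ne 0$, Algorithm \ref{opensample} (with any fixed choice of ${\tt SPOne}$) satisfies
\[
\mathcal{A} := {\tt OpenSP}(L_1,L_2,\{\va\}) = {\tt OpenSP}(L_1',L_2',\mathcal{A}''),
\]
where $\mathcal{A}'' := {\tt OpenSP}(L_1'',L_2'',\{\va\}) \subset \RR^k$. Because ${\tt OpenSP}$ processes its input one point at a time, ${\tt OpenSP}(L_1',L_2',\{\bm{\gamma}\}) \subseteq \mathcal{A}$ for every $\bm{\gamma}\in\mathcal{A}''$. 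Hence it suffices to produce a single $\bm{\gamma}\in\mathcal{A}''$ whose further lift ${\tt OpenSP}(L_1',L_2',\{\bm{\gamma}\})$ meets $U$.

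To find such a $\bm{\gamma}$, I would first pick any witness $\vb=(\beta_1,\ldots,\beta_n)\in U\cap(S'\times\RR^{n-j})$, guaranteed nonempty by the hypothesis on $U$. I need its truncation $(\beta_1,\ldots,\beta_k)$ to lie off $\zero(f_k g_k)$: since $f_k g_k$ is a nonzero polynomial in $\xx_k$, the pullback set $\zero(f_k g_k)\times\RR^{n-k}$ is a proper Zariski-closed subset of $\RR^n$ and so cannot cover the nonempty Euclidean-open set $U\cap(S'\times\RR^{n-j})$; a small perturbation of $\vb$ inside this open set therefore gives $f_k(\beta_1,\ldots,\beta_k)g_k(\beta_1,\ldots,\beta_k)\ne 0$. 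Let $W$ be the maximal open connected component of $\{f_k\ne 0\}\subset\RR^k$ containing $(\beta_1,\ldots,\beta_k)$. Then $W\cap(S'\times\RR^{k-j})\ne\emptyset$, so the first hypothesis, namely open delineability of $f_k$ on $S$ with respect to $L_1'',L_2''$, applied to $W$ and to the base point $\va$, yields a point $\bm{\gamma}\in\mathcal{A}''\cap W$; the definition of ${\tt OpenSP}$ moreover forces $g_k(\bm{\gamma})\ne 0$, and $f_k(\bm{\gamma})\ne 0$ is automatic from $\bm{\gamma}\in W$.

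Finally, I would apply the second hypothesis to the connected component $W$: the point $\vb$ shows $U\cap(W\times\RR^{n-k})\ne\emptyset$, so open delineability of $f_n$ on the open connected set $W$ with respect to $L_1',L_2'$, invoked at the base point $\bm{\gamma}\in W$ with $f_k(\bm{\gamma})g_k(\bm{\gamma})\ne 0$, gives ${\tt OpenSP}(L_1',L_2',\{\bm{\gamma}\})\cap U\ne\emptyset$, and this intersection sits inside $\mathcal{A}$, proving $\mathcal{A}\cap U\ne\emptyset$. The step I expect to require the most care is the perturbation: one must ensure that the witness $\vb$ can be moved inside $U\cap(S'\times\RR^{n-j})$ so that its $k$-truncation avoids $\zero(f_k g_k)$, which uses the basic fact that a nonzero polynomial does not vanish on a nonempty Euclidean-open subset of $\RR^n$; everything else is essentially bookkeeping on top of the two given open-delineability properties.
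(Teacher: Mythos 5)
Your argument is correct and follows essentially the same route as the paper's proof: factor $\mathcal{A}={\tt OpenSP}(L_1,L_2,\{\va\})$ through level $k$ as ${\tt OpenSP}(L_1',L_2',\mathcal{A}'')$, exhibit an open connected component of $f_k\neq 0$ in $\RR^k$ that meets both $S'\times\RR^{k-j}$ and the projection of $U$, and then apply the two open-delineability hypotheses in sequence. The only difference is that you spell out, via the perturbation/Zariski-density argument, why such a component exists and why one can take $f_k g_k\neq 0$ at the relevant truncation, whereas the paper simply asserts the existence of $V$.
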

\begin{proof}
Let $\va\in S'$ be any point such that $f_j(\va)g_j(\va)\ne 0$, for any $\mathcal{A}={\tt OpenSP}(L_1,L_2,\{\va\})$, we have $\mathcal{A}={\tt OpenSP}(L_1',$ $L_2',\mathcal{A'})$ where $\mathcal{A'}={\tt OpenSP}(L_1'',L_2'',\{\va\})$. For any open connected set $U\subset \RR^n$ defined by $f_n\neq0$ with $U\bigcap (S'\times \RR^{n-j})\neq \emptyset$, there exists an open connected set $V\subseteq \RR^k$ defined by $f_k\neq0$ with $U\bigcap (V\times \RR^{n-k})\neq \emptyset$ and $V\bigcap (S\times \RR^{k-s})\neq \emptyset$. Now we have $\mathcal{A'}\bigcap V\neq\emptyset$ since $f_k(\xx_k)$ is open delineable on $S$ w.r.t. $L_1''$ and $L_2''$. And then, $\mathcal{A}\bigcap U\neq \emptyset$ is implied by $U\bigcap (V\times \RR^{n-k})\neq \emptyset$ since $f_n(\xx_n)$ is open delineable on $V$ w.r.t. $L_1'$ and $L_2'$.
\end{proof}

\begin{prop} $($nonempty intersection property$)$  \label{prop:nonempty}
Let $L_1$, $L_2$ be as in Definition \ref{def:opendeli}. For two open sets $S_1$ and $S_2$ of $\RR^s$ $(s\le j)$ with $S_1\bigcap S_2\neq \emptyset$, if $f_n(\xx_n)$ is open delineable on both $S_1$ and $S_2$ w.r.t. $L_1$ and $L_2$, then $f_n(\xx_n)$ is open delineable on $S_1\bigcup S_2$ w.r.t. $L_1$ and $L_2$.
\end{prop}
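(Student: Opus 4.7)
My plan is to reduce the claim to the two given open delineability hypotheses by a careful case analysis on which of $S_1',S_2'$ the base point $\va$ lies in and which cylinder the connected component $U$ meets. Write $S=S_1\cup S_2$ and $S'=S\times\RR^{j-s}=S_1'\cup S_2'$ with $S_i'=S_i\times\RR^{j-s}$. I need to show that for any maximal open connected $U\subset\RR^n$ defined by $f_n\ne 0$ with $U\cap(S'\times\RR^{n-j})\ne\emptyset$ and any $\va\in S'$ with $f_j(\va)g_j(\va)\ne 0$, every $\mathcal{A}={\tt OpenSP}(L_1,L_2,\{\va\})$ meets $U$. The degenerate case $f_jg_j\equiv 0$ makes the conclusion vacuous, so I may assume $f_jg_j$ is a nonzero polynomial, whence its zero set has empty interior in $\RR^j$.

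Without loss of generality $\va\in S_1'$ (otherwise swap $S_1$ and $S_2$). The key subclaim I will establish is that $U$ must intersect $S_1'\times\RR^{n-j}$. Once this is in hand, open delineability of $f_n$ on $S_1$ w.r.t.\ $L_1,L_2$, applied directly to $\va$, immediately yields $\mathcal{A}\cap U\ne\emptyset$ and the proposition follows.

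To prove the subclaim I argue by contradiction. Suppose $U\cap(S_1'\times\RR^{n-j})=\emptyset$. Then the hypothesis $U\cap(S'\times\RR^{n-j})\ne\emptyset$ forces $U\cap(S_2'\times\RR^{n-j})\ne\emptyset$. Since $S_1\cap S_2\ne\emptyset$, the set $(S_1\cap S_2)\times\RR^{j-s}=S_1'\cap S_2'$ is a nonempty open subset of $\RR^j$, so I can select a point $\vb\in S_1'\cap S_2'$ with $f_j(\vb)g_j(\vb)\ne 0$. Because $\vb\in S_2'$ and $U$ meets $S_2'\times\RR^{n-j}$, open delineability of $f_n$ on $S_2$ guarantees that every $\mathcal{A}'={\tt OpenSP}(L_1,L_2,\{\vb\})$ satisfies $\mathcal{A}'\cap U\ne\emptyset$. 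But $\mathcal{A}'\subseteq\{\vb\}\times\RR^{n-j}\subseteq S_1'\times\RR^{n-j}$, contradicting the assumption that $U$ is disjoint from $S_1'\times\RR^{n-j}$.

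The main obstacle is recognising that although the statement is symmetric in $S_1$ and $S_2$, the argument is genuinely asymmetric: the witness $\vb$ must be placed in the overlap $S_1\cap S_2$ so that its lifted sample automatically lands in the $S_1'$-cylinder (where $\va$ lives), while open delineability on $S_2$ is what forces that sample to meet $U$. The hypothesis $S_1\cap S_2\ne\emptyset$ is exactly what makes this shuttling possible; without a common fiber, there is no way to convert the existence of an intersection over $S_2'$ into one over $S_1'$, and the reduction to open delineability on $S_1$ breaks down.
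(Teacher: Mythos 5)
Your proof is correct and mirrors the paper's own argument: both pick a witness point $\vb$ in the nonempty overlap $S_1'\cap S_2'$ off $\zero(f_jg_j)$ and use its lifted sample (which lies in $\{\vb\}\times\RR^{n-j}$, hence in both cylinders) to transfer the fact that $U$ meets one of $S_1'\times\RR^{n-j}$, $S_2'\times\RR^{n-j}$ to the other. The paper packages this as a chain of equivalences among $U\cap(S_i\times\RR^{n-s})\neq\emptyset$ and $\mathcal{A}_i\cap U\neq\emptyset$ rather than an explicit contradiction, and omits your (harmless but careful) remark about the vacuous case $f_jg_j\equiv 0$, but the idea is the same.
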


\begin{proof}
 For any $\va_1\in S_1$, $\va_2\in S_2$, $\va_3\in S_1\bigcap S_2$ with $f_j(\va_i)g_j(\va_i)\ne 0$, any $\mathcal{A}_i={\tt OpenSP}(L_1,L_2,\{\va_i\})$, and open connected set $U\subset \RR^n$ defined by $f_n\neq0$, we have $U\bigcap (S_1\times \RR^{n-s})\neq\emptyset \Longleftrightarrow \mathcal{A}_1\bigcap U\neq\emptyset \Longleftrightarrow \mathcal{A}_3\bigcap U\neq\emptyset\Longleftrightarrow \mathcal{A}_2\bigcap U\neq\emptyset \Longleftrightarrow U\bigcap (S_2 \times \RR^{n-s})\neq\emptyset$.
\end{proof}

\begin{prop} $($union property$)$ \label{prop:union}
Let $L_1, L_2$ be as in Definition \ref{def:opendeli}. For $\sigma\in P_{n,j+1}$, denote $\yy_n=(y_1,\ldots,y_n)=\sigma(\xx_n)$ and $\yy_i=(y_1,\ldots,y_i)$. Let $L'_1=[f_n(\xx_n)$, $p_{n-1}(\yy_{n-1})$, $\ldots,$ $p_j(\yy_j)]$ and $L'_2=[q_n(\xx_n)$, $q_{n-1}(\yy_{n-1})$, $\ldots,$ $q_j(\yy_j)]$ where $p_{i}(\yy_{i}) $ and $q_{i}(\yy_{i})$ are polynomials in $i$ variables.\par
For two open sets $S_1$ and $S_2$ of $\RR^j$, if
(a) $f_n(\xx_n)$ is open delineable on both $S_1$ and $S_2$ w.r.t. $L_1$ and $L_2$, (b) $f_n(\xx_n)$ is open delineable on $S_1\bigcup S_2$ w.r.t. $L'_1$ and $L'_2$, and (c) $p_j(\yy_j)q_j(\yy_j)$ vanishes at
no points in $S_1\bigcup S_2$, then $f_n(\xx_n)$ is open delineable on $S_1\bigcup S_2$ w.r.t. $L_1$ and $L_2$.
\end{prop}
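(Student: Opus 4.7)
My plan is to fix an arbitrary base point $\va\in S_1\cup S_2$ with $f_j(\va)g_j(\va)\ne 0$, an arbitrary maximal open connected component $U$ of $\{\xx_n\in\RR^n\mid f_n(\xx_n)\ne 0\}$ that meets $(S_1\cup S_2)\times\RR^{n-j}$, and an arbitrary $\mathcal{A}={\tt OpenSP}(L_1,L_2,\{\va\})$, and then show $\mathcal{A}\cap U\ne\emptyset$. After assuming without loss of generality that $\va\in S_1$ (the $S_2$ case is symmetric), the whole argument is a two-step reduction: first invoke hypothesis (b) to upgrade the nonemptiness of $U\cap((S_1\cup S_2)\times\RR^{n-j})$ to nonemptiness of the single fiber intersection $U\cap(\{\va\}\times\RR^{n-j})$, and then feed this stronger information into hypothesis (a) applied to $S_1$.

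The first step rests on a simple fiber-preservation observation about the permutation $\sigma$. Because $\sigma\in P_{n,j+1}$ fixes $x_1,\ldots,x_j$, we have $\yy_j=\xx_j$, so any sample set of the form ${\tt OpenSP}(L_1',L_2',\{\va\})$ lives entirely inside the vertical fiber $\{\va\}\times\RR^{n-j}$, whether one reads its first $j$ coordinates as $\yy_j$ or as $\xx_j$. Hypothesis (c) gives $p_j(\va)q_j(\va)\ne 0$, which is precisely the admissibility condition needed to invoke (b) with base point $\va$. Hence for any $\mathcal{A}'={\tt OpenSP}(L_1',L_2',\{\va\})$, hypothesis (b) produces a witness in $\mathcal{A}'\cap U$, and by fiber preservation this witness forces $U\cap(\{\va\}\times\RR^{n-j})\ne\emptyset$. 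Since $\va\in S_1$, we immediately conclude $U\cap(S_1\times\RR^{n-j})\ne\emptyset$.

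The second step is then a direct re-invocation of Definition \ref{def:opendeli}: $f_n$ is open delineable on $S_1$ w.r.t. $L_1,L_2$ by (a), the point $\va\in S_1$ satisfies $f_j(\va)g_j(\va)\ne 0$, and $U$ has just been shown to meet $S_1\times\RR^{n-j}$, so the definition applied to this data yields $\mathcal{A}\cap U\ne\emptyset$, which is exactly what we need. I expect the only mildly delicate point to be the fiber-preservation observation above, which amounts to unwinding how a $\sigma$ that fixes the first $j$ variables interacts with the iterative lifting in Algorithm \ref{opensample}; everything else is a routine bookkeeping application of hypotheses (a)--(c) and Definition \ref{def:opendeli}.
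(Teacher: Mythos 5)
Your proof is correct and follows essentially the same route as the paper's: both hinge on the fiber-preservation observation that ${\tt OpenSP}(L'_1,L'_2,\{\va\})\subseteq\{\va\}\times\RR^{n-j}$ (since $\sigma$ fixes $x_1,\ldots,x_j$), use hypothesis (b) together with admissibility from (c) to conclude $U$ meets the fiber over $\va$, and then feed that into hypothesis (a). Your write-up is if anything slightly cleaner than the paper's, since the WLOG reduction to $\va\in S_1$ avoids the paper's intermediate detour through open delineability on $(S_1\cup S_2)\setminus\zero(p_jq_j)$.
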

\begin{proof}
		Let $\va_1\in S_1,\va_2\in S_2$ be two points such that $g_jp_jq_j(\va_t)\neq0$ for $t=1,2$. Let $\mathcal{A}_t={\tt OpenSP}(L_1,L_2,\{\va_t\})$ and $\mathcal{A'}_t={\tt OpenSP}(L_1',L_2',\{\va_t\})$.	
		For any open connected set $U$ defined by $f_n\neq0$ with $U\bigcap (\va_1\times \RR^{n-j})\neq\emptyset$, then $\mathcal{A}_1\bigcap U\neq\emptyset$ and $\mathcal{A'}_1\bigcap U\neq\emptyset$. Since $f_n(\xx_n)$ is open delineable on $S_1\bigcup S_2$ w.r.t. $L'_1$ and $L'_2$, we have $\mathcal{A'}_2\bigcap U\neq \emptyset$ which implies that $U\bigcap (S_2\times \RR^{n-j})\neq\emptyset$ and $\mathcal{A}_2\bigcap U\neq\emptyset$. Therefore, $f_n(\xx_n)$ is open delineable on $(S_1\bigcup S_2)\backslash \zero(p_jq_j)$ w.r.t. $L_1$ and $L_2$. Since $p_jq_j$ does not vanish at any point of $S_1\bigcup S_2$, $f_n(\xx_n)$ is open delineable on $S_1\bigcup S_2$ w.r.t. $L_1$ and $L_2$.	
\end{proof}

\begin{rem} \label{rem:union}
We use Example 1 to illustrate Proposition \ref{prop:union}. Let $L_1=[f,f_{z},f_{zy}]$, $L_1'=[f,f_{y},f_{yz}]$, $L_2=L_1,$ $L_2'=L_1'$, where $f_y,f_{yz},f_z,f_{zy}$ are defined in Example 1. 
Let $S_1=(-1,1)$, $S_2=(1,\sqrt{\sqrt{5}-1})$ be two open intervals in $x$-axis, where $x=1$ is one of the real roots of the equation $f_{zy}=0$. By typical CAD methods, $S_1$ and $S_2$ are two different cells in $x$-axis.

It could be deduced easily by Theorem \ref{thm:McCallum}, Theorem \ref{thm:Brown}, and Proposition \ref{prop:transitive} that the conditions (a) and (b) of Proposition \ref{prop:union} are satisfied.
Since $f_{yz}$ vanishes at no no points in $S_1\bigcup S_2$, condition (c) is also satisfied.

By Proposition \ref{prop:union}, $f$ is open delineable on $S_1\bigcup S_2$ w.r.t. $L_1$ and $L_2$. Roughly speaking, the real root of $x-1$ would not affect the open delineability, thus we could combine the two cells $S_1$ and $S_2$. 

For the same reason, the real root of $x+1$ would not affect the open delineability either. 
\end{rem}

Now, we define the new projection operator $\Hproj$.
\begin{defn}\label{def:hp}
Let $f\in \ZZ[x_1,\dots,x_n]$. 
For $m (1\le m\le n),$ denote $[\yy]=[y_1,\dots,y_{m}]$ where $ y_i \in \{x_1,\dots,x_n\}$ for $1\le i\le m$ and $y_i\neq y_j$ for $i\neq j$.  For $1\le i\le m$,
$\Hproj(f,[\yy],y_i)$ and $\Hproj(f,[\yy])$ are defined recursively as follows.
	\begin{align*}
		&\Hproj(f,[\yy],y_i)=\Bproj(\Hproj(f,[\hat{\yy]}_i),[y_i]),\\
		&\Hproj(f,[\yy])=\gcd(\Hproj(f,[\yy],y_1),\ldots,\Hproj(f,[\yy],y_m)),
    \end{align*}
where $\hat{[\yy]}_i=[y_1,\ldots,y_{i-1},y_{i+1},\ldots,y_{m}]$ and $\Hproj(f,[~])=f$. Define
\[\overline{\Hproj}(f,i)=\{f,\Hproj(f,[x_{n}]), \ldots, \Hproj(f,[x_{n},\ldots,x_i])\},\] and
\[\widetilde{\Hproj}(f,i)=\{f,\Hproj(f,[x_n],x_n),  \ldots, \Hproj(f,[x_n,\ldots,x_{i}],x_i)\}.\]
\end{defn}

A {\em reduced open CAD} of $f(\xx_n)$ w.r.t. $[x_n,\ldots,x_j]$ is a set of sample points in $\RR^n$ obtained through Algorithm \ref{de:reducedopencad}.
\begin{algorithm}
	\caption{{\tt ReducedOpenCAD}}  \label{de:reducedopencad}
	\begin{algorithmic}[1]
		\REQUIRE{A polynomial $f(\xx_n)\in\ZZ[\xx_n]$, and an open sample $T_{\Hproj(f,[x_n,\ldots,x_{j+1}]),\Hproj(f,[x_n,\dots,x_{j+1}], x_{j+1} )\ne0}$ in $\RR^{j}$} 
		\ENSURE{A set of sample points in $\RR^{n}$}
        \STATE $O:=T_{\Hproj(f,[x_n,\ldots,x_{j+1}]),\Hproj(f,[x_n,\dots,x_{j+1}], x_{j+1} )\ne0}$
        \FOR {$i$ from $j+2$ to $n+1$}
        \STATE $P:=\emptyset$
        \FOR {$\va$ in $O$}
        \IF {$i\le n$}
        \STATE $P:=P\bigcup (\va, {\tt SPOne}(\Hproj(f,[x_n,\ldots,x_{i}])(\va,x_{i-1}),$\\
         $\Hproj(f,[x_n,\dots,x_{i}],x_{i})(\va,x_{i-1})))$
        \ELSE \STATE $P:=P\bigcup (\va, {\tt SPOne}(f(\va,x_{n}),f(\va,x_{n})))$
        \ENDIF
        \ENDFOR
        \STATE $O:=P$
        \ENDFOR
        \RETURN $O$
	\end{algorithmic}
\end{algorithm}

\begin{lem}\label{thm:1} $\cite{han2012}$ Let $f$ and $g$ be coprime in $\ZZ[\xx_n]$. For any connected open set $U$ of $\RR^n$, the open set $V = U\backslash \zero(f,g)$ is also connected.
\end{lem}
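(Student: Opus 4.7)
The plan is to combine two ingredients: (i) because $f$ and $g$ are coprime, the common zero set $\zero(f,g)$ has real codimension at least two in $\RR^n$, and (ii) removing a closed set of codimension at least two from a connected open subset of $\RR^n$ preserves connectedness.

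For (i), I would use Gauss's lemma to pass from coprimality in $\ZZ[\xx_n]$ to coprimality in $\CC[\xx_n]$. For each irreducible factor $f_i$ of $f$ over $\CC$, the polynomial $g$ is not divisible by $f_i$, so $V_\CC(f_i,g)$ is a proper closed subvariety of the irreducible hypersurface $V_\CC(f_i)$ and hence has complex dimension at most $n-2$. Taking the finite union over the factors of $f$ yields $\dim_\CC V_\CC(f,g)\le n-2$. Since the real part of a complex algebraic variety of complex dimension $d$ has real (semi-algebraic) dimension at most $d$, we obtain $\dim_\RR \zero(f,g)\le n-2$. The degenerate case $n=1$ forces $\zero(f,g)=\emptyset$, whence $V=U$ and the claim is trivial; I will assume $n\ge 2$ from here on.

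For (ii), observe first that $V=U\setminus\zero(f,g)$ is open, because $\zero(f,g)$ is closed in $\RR^n$. Fix any two points $p,q\in V$. Since the connected open set $U$ is path-connected, pick a continuous path $\gamma\colon [0,1]\to U$ with $\gamma(0)=p$ and $\gamma(1)=q$. Compactness of $\gamma([0,1])$ inside the open set $U$ provides $\varepsilon>0$ such that the closed $\varepsilon$-tube around $\gamma$ still lies in $U$, so any sufficiently small perturbation of $\gamma$ keeps the path in $U$.

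The remaining step, and the principal obstacle, is to perturb $\gamma$ (keeping its endpoints fixed) so that it avoids $\zero(f,g)$. I would stratify the semi-algebraic set $\zero(f,g)$ as a finite disjoint union of smooth semi-algebraic submanifolds of $\RR^n$, each of real dimension at most $n-2$ (for instance using a cylindrical algebraic decomposition, which is natural in the context of this paper). A standard transversality/general-position argument then yields a smooth path $\tilde\gamma$ arbitrarily close to $\gamma$, hence still inside the $\varepsilon$-tube in $U$, that is transverse to every stratum; transversality of a one-dimensional source with targets of codimension at least two forces $\tilde\gamma([0,1])\cap\zero(f,g)=\emptyset$. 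Thus $\tilde\gamma$ lies entirely in $V$ and connects $p$ to $q$, proving that $V$ is path-connected and therefore connected. The delicate points to verify are that the perturbation can be performed simultaneously against every stratum and that the endpoints $p,q\in V$ can be kept fixed throughout, both of which follow from the finiteness of the stratification and the openness of $V$ near $p$ and $q$.
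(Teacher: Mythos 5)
The paper does not give a proof of this lemma; it is imported verbatim from \cite{han2012} and treated as a black box, so there is no internal argument to compare your proposal against. Taken on its own, your argument is sound and is the standard route: bound the real dimension of $\zero(f,g)$ by $n-2$ using coprimality, then show that deleting a closed set of real codimension at least two from a connected open subset of $\RR^n$ cannot disconnect it, by perturbing a connecting path into general position.

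Two small points would make the write-up airtight. Gauss's lemma only carries coprimality from $\ZZ[\xx_n]$ to $\mathbb{Q}[\xx_n]$; the further step to $\mathbb{C}[\xx_n]$ requires the separate (still standard) fact that gcds in $K[\xx_n]$ are stable under extension of the base field $K$, for instance by computing them with the Euclidean algorithm over $K(\xx_{n-1})[x_n]$, or by flatness of $K[\xx_n]\to L[\xx_n]$. The transversality step is the genuine content of part (ii), and the caveats you flag at the end are exactly the right ones: the endpoints are kept fixed by multiplying the perturbation parameter by a bump function vanishing at $t=0,1$ (legitimate because $p,q\notin\zero(f,g)$), and simultaneous transversality to the finitely many Nash strata of a CAD of $\zero(f,g)$ holds because the exceptional parameter set for each stratum is null and a finite union of null sets is null. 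These are routine completions, so the proposal is correct in substance.
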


\begin{lem}\label{lem:connected} Let $f=\gcd(f_1,\ldots,f_m)$ where $f_i\in\ZZ[\xx_n]$, $i=1,2,\ldots,m.$ Suppose $f$ has no real roots in a connected open set $U\subseteq\RR^n$, then the open set $V = U\backslash \zero(f_1,\ldots,f_m)$ is also connected.
\end{lem}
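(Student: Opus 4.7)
The plan is to reduce the statement to the two-polynomial case of Lemma~\ref{thm:1} by constructing a single auxiliary polynomial $h$ coprime to $g_1 := f_1/f$ whose common zeros with $g_1$ still contain $\zero(f_1,\ldots,f_m)\cap U$.

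First I would write $f_i = f g_i$ with $g_i \in \ZZ[\xx_n]$; the hypothesis $f = \gcd(f_1,\ldots,f_m)$ forces $\gcd(g_1,\ldots,g_m) = 1$, and since $f$ does not vanish on $U$, we have $\zero(f_1,\ldots,f_m)\cap U = \zero(g_1,\ldots,g_m)\cap U$, so it suffices to show $V = U\setminus\zero(g_1,\ldots,g_m)$ is connected. I would then set $h := a_2 g_2 + a_3 g_3 + \cdots + a_m g_m$ for integers $a_j$ chosen so that $\gcd(g_1, h) = 1$: for each of the finitely many irreducible factors $p$ of $g_1$, the condition $\gcd(g_1,\ldots,g_m)=1$ produces some $g_j$ ($j\ge 2$) not divisible by $p$, so $h\bmod p$ is a nontrivial $\ZZ$-linear form in $(a_2,\ldots,a_m)$ and the bad tuples $(a_j)$ lie in a proper sublattice of $\ZZ^{m-1}$; a generic integer choice of $(a_j)$ then avoids all such sublattices and in particular makes $h$ a nonzero polynomial.

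Once $h$ is constructed, Lemma~\ref{thm:1} applied to the coprime pair $(g_1, h)$ gives that $U\setminus \zero(g_1,h)$ is connected. Because $h$ vanishes whenever $g_2,\ldots,g_m$ all do, $\zero(g_1,\ldots,g_m)\subseteq \zero(g_1,h)$, so $U\setminus \zero(g_1,h)\subseteq V$. Moreover $\zero(g_1,h)$ is the real zero set of two nonzero polynomials, hence nowhere dense in $U$, so $U\setminus\zero(g_1,h)$ is dense in $V$. A topological space containing a dense connected subspace is itself connected, so $V$ is connected.

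The main obstacle is the second step: verifying that a suitable integer combination $h$ exists. This boils down to an arithmetic avoidance argument over the irreducible factors of $g_1$ (prime integer factors and primitive irreducible polynomial factors of $\ZZ[\xx_n]$ alike), showing that each condition $p\mid h$ cuts out a proper sublattice of $\ZZ^{m-1}$ and that finitely many proper sublattices cannot exhaust $\ZZ^{m-1}$. This is the only step that really uses $\gcd(g_1,\ldots,g_m)=1$; once $h$ is in hand, the remainder of the proof combines Lemma~\ref{thm:1} with a routine density argument.
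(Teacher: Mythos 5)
Your strategy---collapse $g_2,\ldots,g_m$ into a single integer linear combination $h=\sum_{j\ge 2}a_jg_j$ coprime to $g_1$, then invoke Lemma~\ref{thm:1} once and finish by density---is a genuinely different route from the paper's, which instead inducts on $m$: the paper sets $g=\gcd(f_1,\ldots,f_{m-1})$, $g_i=f_i/g$, uses the containment $\zero(f_1,\ldots,f_m)\subseteq\zero(g_1,\ldots,g_{m-1})\cup\zero(g,f_m)$, applies Lemma~\ref{thm:1} to the coprime pair $(g,f_m)$ and induction to $(g_1,\ldots,g_{m-1})$, and then uses the same density observation you do. Your route trades the induction for an existence claim about $h$, and that is exactly where the gap lies. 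You justify the choice of $(a_2,\ldots,a_m)$ by asserting that finitely many proper sublattices cannot exhaust $\ZZ^{m-1}$. That statement is false: already $\ZZ^2$ is the union of the three index-two subgroups $\{(a,b):2\mid b\}$, $\{(a,b):2\mid a\}$ and $\{(a,b):2\mid a+b\}$ (lift the three lines through the origin of $(\ZZ/2\ZZ)^2$). The failure is not incidental to your setup: for a prime integer $p$ dividing the content of $g_1$, the ``bad'' set $\Lambda_p=\{(a_j):p\mid\sum a_jg_j\}$ is of finite index in $\ZZ^{m-1}$, so a bare genericity remark about proper sublattices does not rule out that these $\Lambda_p$ together cover everything.

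The needed $(a_j)$ does exist, but one has to use the structure of the $\Lambda_p$. The cleanest repair is to ignore integer primes entirely: for each irreducible polynomial factor $p$ of $g_1$, the quotient $\ZZ^{m-1}/\Lambda_p$ embeds in the torsion-free $\ZZ$-module $\ZZ[\xx_n]/(p)$, so $\Lambda_p$ lies in a proper $\mathbb{Q}$-subspace of $\mathbb{Q}^{m-1}$; a finite union of proper subspaces over the infinite field $\mathbb{Q}$ is not the whole space, so choose $(a_j)\in\ZZ^{m-1}$ outside all of them (also outside the kernel of $(a_j)\mapsto\sum a_jg_j$, so $h\neq 0$). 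Then $\gcd(g_1,h)$ is an integer constant; dividing $g_1$ and $h$ by their integer contents does not change $\zero(g_1,h)$, and the primitive parts are coprime in $\ZZ[\xx_n]$, so Lemma~\ref{thm:1} applies to them. Alternatively, keep the integer primes but handle them by the Chinese Remainder Theorem (each $\Lambda_p$ is a proper subgroup modulo $p$, and the constraints at distinct primes are independent), then refine within a good residue class to dodge the finitely many proper rational subspaces coming from polynomial factors. With either repair, the inclusion $\zero(g_1,\ldots,g_m)\subseteq\zero(g_1,h)$ and your dense-connected-subset argument go through as written.
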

\begin{proof}
Without loss of generality, we can assume that $f=1$. If $m=1,$ the result is obvious. The result of case $m=2$ is just the claim of Lemma \ref{thm:1}. For $m\ge3$, let $g=\gcd(f_1,\ldots,f_{m-1})$ and $g_i=f_i/g$ $(i=1,\ldots,m-1)$, then $\gcd(f_m,g)=1$ and $\gcd(g_1,\ldots,g_{m-1})=1$. Let $A=\zero(f_1,\ldots,f_m)$, $B=\zero(g_1,\ldots,g_{m-1})\bigcup \zero(g,f_m)$. Since $A\subseteq B$, we have $U\backslash B\subseteq U\backslash A$. Notice that the closure of ${U\backslash B}$ equals the closure of ${U\backslash A}$, it suffices to prove that $U\backslash B$ is connected, which follows directly from Lemma \ref{thm:1} and induction.
\end{proof}

As a Corollary of Theorem \ref{thm:McCallum} and Theorem \ref{thm:Brown}, we have
\begin{prop}\label{prop:open}
	Let $f\in\ZZ[\xx_{n}]$ be a squarefree polynomial with level $n$. Then $f$ is open delineable on every open connected set defined by $\Bproj(f,[x_{n}])\neq0$ in $\RR^{n-1}$ w.r.t. $\overline{\Hproj}(f,n)$ and $\widetilde{\Hproj}(f,n)$.
\end{prop}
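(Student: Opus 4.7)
The plan is to unfold the definitions of $\Hproj$ and of ${\tt OpenSP}$ until the statement reduces to a standard analytic delineability argument, then invoke Theorems \ref{thm:McCallum} and \ref{thm:Brown}. First I would compute both lists explicitly: by Definition \ref{def:hp}, since $\hat{[x_n]}_1 = [~]$, we have $\Hproj(f,[x_n],x_n) = \Bproj(\Hproj(f,[~]),[x_n]) = \Bproj(f,[x_n])$, and $\Hproj(f,[x_n])$ is the gcd of a single polynomial, namely $\Bproj(f,[x_n])$ itself. Hence $\overline{\Hproj}(f,n) = [f,\Bproj(f,[x_n])] = \widetilde{\Hproj}(f,n)$. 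Open delineability on $S$ then becomes the following concrete claim: for every open connected $S \subseteq \RR^{n-1}$ on which $\Bproj(f,[x_n])$ has no zero, every point $\va \in S$, and every maximal open connected component $U$ of $\{f \neq 0\}$ with $U \cap (S\times\RR) \neq \emptyset$, the set ${\tt OpenSP}([f,\Bproj(f,[x_n])],[f,\Bproj(f,[x_n])],\{\va\})$ meets $U$.

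Next I would apply McCallum/Brown. Because $f$ is squarefree of level $n$, one has (up to a nonzero constant) $\Bproj(f,[x_n]) = \lc(f,x_n)\cdot\discrim(f,x_n)$. So on $S$ the discriminant is nonzero and order-invariant, the leading coefficient is sign-invariant (hence nonzero), and $f$ does not vanish identically at any point of $S$. Theorem \ref{thm:McCallum} (with the refinement of Theorem \ref{thm:Brown}) thus gives that $f$ is analytically delineable on $S$: the variety of $f$ in $S\times\RR$ is the union of graphs of continuous functions $\theta_1 < \cdots < \theta_t$ on $S$, and the $f$-sectors are the maximal open connected subsets of $\{f \neq 0\} \cap (S\times\RR)$. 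Consequently $U \cap (S\times\RR)$ is a (nonempty) union of such $f$-sectors, and in particular, for every $\va \in S$ the fibre $U \cap (\{\va\}\times\RR)$ is a nonempty union of open intervals of the form $(\theta_i(\va),\theta_{i+1}(\va))$ (allowing $\theta_0 = -\infty$ and $\theta_{t+1} = +\infty$).

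Finally I would unpack the one-step call of ${\tt OpenSP}$. With $j = n-1$ the loop runs only for $i = n$ and produces $\{\va\} \times {\tt SPOne}(f(\va,x_n),f(\va,x_n))$; by its specification ${\tt SPOne}$ returns one representative from each open connected subset of $\RR$ on which $f(\va,x_n)$ is nonzero, i.e., one point from each interval between consecutive real roots of $f(\va,x_n)$. Since $\lc(f,x_n)(\va)\neq 0$ and $\discrim(f,x_n)(\va)\neq 0$, these roots are exactly the values $\theta_1(\va),\ldots,\theta_t(\va)$. Every interval that contributes to $U \cap (\{\va\}\times\RR)$ therefore receives at least one sample point, so the output meets $U$, proving open delineability.

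The main obstacle is purely notational: one must be careful that $\Bproj(f,[x_n])\neq 0$ on $S$ really does imply that both $\lc$ and $\discrim$ of $f$ are controlled at every $\va\in S$ (so that $f(\va,x_n)$ truly has degree equal to $\deg(f,x_n)$ with simple real roots), which is what licenses Brown's version of McCallum's theorem and makes the count of intervals match the count of sections. Once this identification is in hand, the argument is a direct application of analytic delineability and the specification of ${\tt SPOne}$.
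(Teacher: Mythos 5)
Your proof is correct and takes essentially the same approach as the paper, which records this proposition as an immediate corollary of Theorems \ref{thm:McCallum} and \ref{thm:Brown} (having already noted, in a preceding remark, that analytic delineability of $f$ on $S$ yields open delineability w.r.t.\ $[f,\Bproj(f,[x_n])]$ and $[f,\Bproj(f,[x_n])]$). You have correctly supplied the details the paper leaves implicit: the computation $\overline{\Hproj}(f,n)=\widetilde{\Hproj}(f,n)=[f,\Bproj(f,[x_n])]$, the identification $\Bproj(f,[x_n])=\lc(f,x_n)\cdot\discrim(f,x_n)$ up to a constant, and the unwinding of {\tt OpenSP} and {\tt SPOne} to match sectors of the cylinder over $S$ with intervals between consecutive roots of $f(\va,x_n)$.
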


The following Theorem is the main result of this paper, which shows that the reduced open CAD owns the property of open delineability.
\begin{thm} \label{thm:open delineable}
Let $j$ be an integer and $2\le j\le n$. For any given polynomial $f(\xx_n)\in \ZZ[\xx_n]$ and any open connected set $U\subset \RR^{j-1}$ of $\Hproj(f,[x_n,\ldots,x_j])\neq0$, let $S=U\backslash\zero(\{\Hproj(f,[x_n,\dots,x_j],x_t)\mid t=j,\ldots,n\})$. Then $f(\xx_n)$ is open delineable on the open connected set $S$ w.r.t. $\overline{\Hproj}(f,j)$ and $\widetilde{\Hproj}(f,j)$. As a result, a reduced open CAD of $f(\xx_n)$ w.r.t. $[x_n,\ldots,x_j]$ is an open sample defined by $f(\xx_n)$. 
\end{thm}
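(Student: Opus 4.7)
My plan is to prove Theorem \ref{thm:open delineable} by induction on $m = n - j + 1$, the number of variables being projected.  The base case $j = n$ collapses both $\Hproj(f,[x_n])$ and $\Hproj(f,[x_n],x_n)$ to $\Bproj(f,[x_n])$, forces $S = U$, and so follows directly from Proposition \ref{prop:open}.  Throughout the induction, the fact that $S$ is a connected open set is a consequence of Lemma \ref{lem:connected} applied to the gcd identity $\Hproj(f,[x_n,\ldots,x_j]) = \gcd_{t=j,\ldots,n}\Hproj(f,[x_n,\ldots,x_j],x_t)$ on the connected set $U$.

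For the inductive step I would apply Proposition \ref{prop:transitive} with intermediate splitting level $k = j$, reducing the goal to two sub-claims: (a)~$\Hproj(f,[x_n,\ldots,x_{j+1}])$ is open delineable on $S$ with respect to the length-two chains $L_1'' = [\Hproj(f,[x_n,\ldots,x_{j+1}]),\Hproj(f,[x_n,\ldots,x_j])]$ and $L_2'' = [\Hproj(f,[x_n,\ldots,x_{j+1}],x_{j+1}),\Hproj(f,[x_n,\ldots,x_j],x_j)]$; and (b)~$f$ is open delineable on every open connected component of $\{\Hproj(f,[x_n,\ldots,x_{j+1}]) \neq 0\} \subset \RR^j$ with respect to $\overline{\Hproj}(f,j+1)$ and $\widetilde{\Hproj}(f,j+1)$.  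Part~(a) is immediate from Theorem \ref{thm:Brown}: on $S$ the polynomial $\Hproj(f,[x_n,\ldots,x_j],x_j) = \Bproj(\Hproj(f,[x_n,\ldots,x_{j+1}]),[x_j])$ is nonzero by the very construction of $S$, so $\Hproj(f,[x_n,\ldots,x_{j+1}])$ is analytic, hence open, delineable on $S$.

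The main obstacle is part~(b): the inductive hypothesis only yields open delineability on the proper subset $S^{(j+1)} = V \setminus \bigcup_{t=j+1}^{n}\zero(\Hproj(f,[x_n,\ldots,x_{j+1}],x_t))$ of the connected component $V$, whereas Proposition \ref{prop:transitive} demands it on the full $V$.  To close this gap I plan to iterate Proposition \ref{prop:union} using the transpositions $\sigma_t \in P_{n,j+1}$ swapping $x_{j+1}$ and $x_t$ for $t = j+2,\ldots,n$.  By the permutation-invariance of the gcd defining $\Hproj$, a parallel application of the inductive hypothesis to the permuted projection order produces a companion open delineability; the permuted chain shares the same level-$j$ polynomial $\Hproj(f,[x_n,\ldots,x_{j+1}])$ in its $\overline{\Hproj}$-component but carries $\Hproj(f,[x_n,\ldots,x_{j+1}],x_t)$ in place of $\Hproj(f,[x_n,\ldots,x_{j+1}],x_{j+1})$ in its $\widetilde{\Hproj}$-component, providing exactly the alternate chain needed by Proposition \ref{prop:union}.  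Hypothesis~(c) of that proposition is then verified by the non-vanishing of the common gcd $\Hproj(f,[x_n,\ldots,x_{j+1}])$ on $V$, and successive gluings across the excess zero sets $\zero(\Hproj(f,[x_n,\ldots,x_{j+1}],x_t))$ promote open delineability from $S^{(j+1)}$ to the entire $V$, completing the induction.
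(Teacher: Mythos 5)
Your high-level architecture (induction on the number of projected variables, base case via Proposition \ref{prop:open}, invoking the transitive, union, and nonempty-intersection properties, and Lemma \ref{lem:connected} for connectedness of $S$) mirrors the paper's, but two of your intermediate claims do not survive scrutiny, and the trouble traces back to a single misreading.

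You write $S^{(j+1)} = V \setminus \bigcup_{t=j+1}^{n}\zero(\Hproj(f,[x_n,\ldots,x_{j+1}],x_t))$, but the paper's $\zero(\{\cdot\})$ denotes the \emph{common} zero set (an intersection, not a union; see the definition of $\zero(L)$ in Section~3). Consequently $S$ in the theorem removes from $U$ only those points at which \emph{all} of the polynomials $\Hproj(f,[x_n,\ldots,x_j],x_t)$ simultaneously vanish. This has two immediate consequences for your plan. First, your sub-claim (a) is false as stated: $\Hproj(f,[x_n,\ldots,x_j],x_j) = \Bproj(\Hproj(f,[x_n,\ldots,x_{j+1}]),[x_j])$ can (and typically does) vanish at points of $S$, because $S$ only excludes the \emph{common} zeros. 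The ``very construction of $S$'' does not give you the nonvanishing you need to invoke Theorem \ref{thm:Brown} on all of $S$, so the fixed one-step chain $L_1'', L_2''$ cannot be used globally on $S$ in Proposition \ref{prop:transitive}. This is precisely why the paper argues \emph{locally}: for each $\va\in S$ some particular $\Hproj(f,[\zz],x_{i'})$ is nonzero near $\va$, and the choice of $i'$ (hence the choice of one-step chain, possibly via a permutation $\sigma\in P_{n,i}$) varies with $\va$. The local pieces are then glued by Proposition \ref{prop:nonempty} using connectedness of $S$, rather than by one global application of Proposition \ref{prop:transitive}.

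Second, your bridge for (b) also fails as written. You claim hypothesis~(c) of Proposition \ref{prop:union} is ``verified by the non-vanishing of the common gcd $\Hproj(f,[x_n,\ldots,x_{j+1}])$ on $V$,'' but condition~(c) requires \emph{both} tail entries $p_j$ and $q_j$ of the permuted chains to be nonzero on $S_1\cup S_2$, and for the chain obtained from $\sigma_t$ the entry $q_j$ is $\Hproj(f,[x_n,\ldots,x_{j+1}],x_t)$, which may vanish at points of $V$. The paper again sidesteps this by applying Proposition \ref{prop:union} only on a small set $S'_\va$ chosen in advance so that the relevant $\Hproj(f,[\zz],x_{i'})$ (equivalently, the $q_j$ of that permuted chain) is nonvanishing there. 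If you intend to patch your argument, the shortest route is to adopt exactly this local-then-glue pattern: show that every $\va\in S$ has a connected neighbourhood $S_\va\subseteq S$ on which some $\Hproj(f,[\zz],x_{t})$ is nonzero, obtain open delineability on $S_\va$ via the corresponding (possibly permuted) chain plus transitivity, transfer back to the original chain with Proposition \ref{prop:union}, and finally glue all the $S_\va$'s together with Proposition \ref{prop:nonempty} and the connectedness of $S$.
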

\begin{proof} First, by Lemma \ref{lem:connected}, $S$ is open connected.
We prove the theorem by induction on $k=n-j$. When $k=0$, it is obvious true from Proposition \ref{prop:open}. Suppose the theorem is true for all polynomials $g(\xx_k)\in \ZZ[\xx_k]$ with $k=0,1,\ldots,n-i-1$. We now consider the case $k=n-i$. Let $[\zz]=[x_n,\ldots,x_i]$. For any given polynomial $f(\xx_n)\in \ZZ[\xx_n]$, let $U\subset \RR^{i-1}$ be an open connected set of $\Hproj(f,[\zz])\neq0$ and $S=U\backslash\zero(\{\Hproj(f,[\zz],x_t)\mid t=i,\ldots,n\})$.

For any point $\va\in S$ with $\Hproj(f,[\zz] ,{x_i})(\va)\ne0$, there exists an open connected set $S_{\va}\subset\RR^{i-1}$ such that $\va\in S_{\va}$ and $0\notin\Hproj(f,[\zz],{x_i} )(S_{\va})$. By induction, $\Hproj(f,[x_n,\ldots,x_{i+1}])$ is open delineable on $S_{\va}$ w.r.t. $\{\Hproj(f,[\zz])\}$ and $\{\Hproj(f,[\zz],{x_i})\}$. By induction again and the transitive property of open delineable $($Proposition \ref{prop:transitive}$)$, $f$ is open delineable on $S_{\va}$ w.r.t. $\overline{\Hproj}(f,i)$ and $\widetilde{\Hproj}(f,i)$.

For any point $\va\in S$ with $\Hproj(f,[\zz],{x_i} )(\va)=0$, there exists an $i'$ such that $n\ge i'\ge i+1$ and $\Hproj(f,[\zz],{x_{i'}} )(\va)\ne0$. Thus there exists an open connected set $S'_{\va}$ of $\RR^{i-1}$ such that $\va\in S'_{\va}$ and
		$0\notin\Hproj(f,[\zz],x_{i'} )(S'_{\va})$. Let $\sigma\in P_{n,i}$ with $\sigma(x_i)=x_{i'}$, in such case, $f(\sigma(\xx_n))$ is open delineable on $S'_{\va}$ w.r.t. $\overline{\Hproj}(f(\sigma(\xx_n)),i)$ and $\widetilde{\Hproj}(f(\sigma(\xx_n)),i)$. For any $\vb\in S'_{\va}$ with $\Hproj(f,[\zz],x_i)(\vb)\ne 0$, there exists an open connected set $S''_{\va}\subset S'_{\va}$ and $f$ is open delineable on $S''_{\va}$ w.r.t. $\overline{\Hproj}(f,i)$ and $\widetilde{\Hproj}(f,i)$.
		From union property of open delineable (Proposition \ref{prop:union}), $f$ is open delineable on $S'_{\va}$ w.r.t. $\overline{\Hproj}(f,i)$ and $\widetilde{\Hproj}(f,i)$.\par
		To summarize, the above discussion shows that for any point $\va\in S$, there exists an open connected set $S_{\va}\subset S$ such that $\va\in S_{\va}$ and $f$ is
		open delineable on $S_{\va}$ w.r.t. $\overline{\Hproj}(f,i)$ and $\widetilde{\Hproj}(f,i)$. By the nonempty intersection property of open delineable $($Proposition \ref{prop:nonempty}$)$ and the fact that $S$ is connected, $f(\xx_n)$ is open delineable on $S$ w.r.t. $\overline{\Hproj}(f,i)$ and $\widetilde{\Hproj}(f,i)$ as desired.

Therefore, the theorem is proved by induction. The last statement of the theorem follows from Proposition \ref{prop:sample}. 
	\end{proof}

\begin{rem}\label{re:a1}
As an application of Theorem \ref{thm:open delineable}, we could design a CAD-like method to get an open sample defined by $f(\xx_n)$ for a given polynomial $f(\xx_n)$. Roughly speaking, if we have already got an open sample defined by $\Hproj(f,[x_n,\ldots,x_j])$ in $\RR^{j-1}$, according to Theorem \ref{thm:open delineable}, we could obtain an open sample defined by $f$. That process could be done recursively.

In the definition of $\Hproj$, we first choose $m$ variables from $\{x_1,...,x_n\}$, compute all projection polynomials under all possible orders of those $m$ variables, and then compute the gcd of all those projection polynomials.
Therefore, Theorem \ref{thm:open delineable} provides us many ways for designing various algorithms for computing open samples. For example, we may set $m=2$ and choose $[x_n,x_{n-1}]$, $[x_{n-2}, x_{n-3}]$, etc. successively in each step. Because there are only two different orders for two variables, we compute the gcd of two projection polynomials under the two orders in each step. Algorithm \ref{TwoHp} is based on this choice.
\end{rem}

\begin{algorithm}
	\caption{\TwoHp} \label{TwoHp}
	\begin{algorithmic}[1]
		\REQUIRE{A polynomial $f \in \ZZ[\xx_n]$ of level $n$.}
		\ENSURE{An open sample defined by $f$, i.e., a set of sample points which contains at least one point from each connected component of $f\neq0$ in $\RR^{n}$}
        \STATE $g:=f$;
        \STATE $L_1:=\{\}$;
        \STATE $L_2:=\{\}$;
\WHILE{$i\ge 3$}
        \STATE $L_1:=L_1\bigcup \overline{\Hproj}(g,i-1)$;
		\STATE $L_2:=L_2\bigcup \widetilde{\Hproj}(g,i-1)$;
        \STATE $g:=\Hproj(g,[x_{i},x_{i-1}])$;
         \STATE $i:=i-2$;
\ENDWHILE
\IF {$i=2$}
        \STATE $L_1:=L_1\bigcup \overline{\Hproj}(g,i)$;
		\STATE $L_2:=L_2\bigcup \widetilde{\Hproj}(g,i)$;
        \STATE $g:=\Hproj(g,[x_{i}])$;
        \ENDIF
        \STATE $T$:=${\tt SPOne}(L_1^{[1]},L_2^{[1]})$;
        \STATE $C$:= ${\tt OpenSP}(L_1,L_2,T)$;
        \RETURN $C$.
	\end{algorithmic}
\end{algorithm}
\begin{rem}\label{re:a2}
If $\Hproj(f,[x_{n},x_{n-1})]\neq \Bproj(f,[x_{n},x_{n-1}])$ and $n>3$, it is obvious that the scale of projection in Algorithm \ref{TwoHp} is smaller than that of open CAD in Definition \ref{def:opencad}.
\end{rem}
\begin{rem}
It could be shown that if we modify the definition of $\Hproj$ by choosing several (not all) orders of those $m$ variables and computing the gcd of the projection polynomials under those orders, Theorem \ref{thm:open delineable} is still valid. Due to page limit, we will give a proof of this claim in our future work.
\end{rem}

\section{
projection operator Np}
\label{sec:improved}
In this section, we combined the idea of $\Hproj$ and the simplified CAD projection operator $\Nproj$ we introduced previously in \cite{han2012}, to get a new algorithm for testing semi-definiteness of polynomials.
\begin{defn} $\cite{han2012}$
	Suppose $f\in \ZZ[\xx_{n}]$ is a polynomial of level $n$. Define
	\begin{align*}
		& {\rm Oc}(f,x_n)=\sqrfree_1(\lc(f,x_{n})), {\rm Od}(f,x_n)=\sqrfree_1(\discrim(f,x_{n})),\\
		& {\rm Ec}(f,x_n)=\sqrfree_2(\lc(f,x_{n})), {\rm Ed}(f,x_n)=\sqrfree_2(\discrim(f,x_{n})),\\
		& {\rm Ocd}(f,x_n)={\rm Oc}(f,x_n)\cup {\rm Od}(f,x_n),\\
		& {\rm Ecd}(f,x_n)={\rm Ec}(f,x_n)\cup {\rm Ed}(f,x_n).
	\end{align*}
	The {\em secondary} and {\em principal parts} of the projection operator $\Nproj$ are defined as
	\begin{align*}
		\Nproj_1(f,[x_{n}])=&{\rm Ocd}(f,x_n),\\
		\Nproj_{2}(f,[x_{n}])=&\{\prod_{g\in {\rm Ecd}(f,x_{n})\setminus {\rm Ocd}(f,x_{n})}{g}\}.
	\end{align*}
	If $L$ is a set of polynomials of level $n$, define
	\begin{align*}
		\Nproj_1(L,[x_{n}])&=\bigcup_{g\in L}{\rm Ocd}(g,x_{n}),\\
		\Nproj_{2}(L,[x_{n}])&=\bigcup_{g\in L}\{\prod_{h\in {\rm Ecd}(g,x_n)\setminus \Nproj_1(L,[x_{n}])}{h}\}.
	\end{align*}
\end{defn}
Based on the projection operator $\Nproj$, we proposed an algorithm, \Proineq, in \cite{han2012} for proving polynomial inequalities. Algorithm \Proineq\ takes a polynomial $f(\xx_n) \in \ZZ[\xx_n]$ as input, and returns whether or not $f(\xx_n) \ge0$ on $\RR^n$.
The readers are referred to \cite{han2012} for the details of \Proineq.

The projection operator $\Nproj$ is extended and defined in the next definition.
\begin{defn}
	Let $f\in \ZZ[x_1,\dots,x_n]$ with level $n$. Denote $[\yy]=[y_1,\dots,y_{m}]$, for $1\le m\le n$, where $ y_i \in \{x_1,\dots,x_n\}$ for $1\le i\le m$ and $y_i\neq y_j$ for $i\neq j$. Define
 \[\Nproj(f,[x_i])=\Nproj_2(f,[x_i]), \Nproj(f,[x_i],x_i)=\prod_{g\in\Nproj_1(f,[x_i])}{g}.\]
 For $m(m\ge2)$ and $i(1\le i \le m)$, $\Nproj(f,[\yy],y_i)$ and $\Nproj(f,[\yy])$ are defined recursively as follows.
	\begin{align*}
&\Nproj(f,[\yy],y_i)=\Bproj(\Nproj(f,\hat{[\yy]}_i),y_i), \\
&\Nproj(f,[\yy])=\gcd(\Nproj(f,[\yy],y_1),\ldots,\Nproj(f,[\yy],y_m)),
\end{align*}
where $\hat{[\yy]}_i=[y_1,\ldots,y_{i-1},y_{i+1},\ldots,y_m]$. Define
\[\overline{\Nproj}(f,i)=\{f,\Nproj(f,[x_{n}]), \ldots, \Nproj(f,[x_{n},\ldots,x_i)]\},\] and
\[\widetilde{\Nproj}(f,i)=\{f,\Nproj(f,[x_n],x_n),\ldots,\Nproj(f,[x_n,\dots,x_i],x_i)\}.\]
\end{defn}

\begin{thm}\label{thm:2} $\cite{han2012}$
	Given a positive integer $n\ge2$. Let $f\in \ZZ[\xx_n]$ be a non-zero squarefree polynomial and $U$ a connected component of $\Nproj(f,[x_{n}])\neq0$ in $\RR^{n-1}$. If the polynomials in $\Nproj_{1}(f,[x_n])$ are semi-definite on $U$, then $f$ is delineable on $V=U\backslash \bigcup_{h\in \Nproj_{1}(f,[x_n])}\zero(h)$.
\end{thm}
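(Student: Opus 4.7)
The plan is to verify the hypotheses of Theorem \ref{thm:McCallum} on $V$ by analyzing the squarefree decompositions of $\lc(f,x_n)$ and $\discrim(f,x_n)$. Using Definition \ref{de:sqrfree}, I would write
$$\lc(f,x_n)=a\prod_{l\in{\rm Oc}(f,x_n)}l^{2j_l-1}\prod_{h\in{\rm Ec}(f,x_n)}h^{2i_h},$$
and similarly decompose $\discrim(f,x_n)$. Every factor appearing here is either an element of $\Nproj_1(f,[x_n])={\rm Ocd}(f,x_n)$, hence nonzero on $V$ by construction, or an element of ${\rm Ecd}(f,x_n)\setminus{\rm Ocd}(f,x_n)$, hence a divisor of $\Nproj(f,[x_n])=\Nproj_2(f,[x_n])$, which is nonzero on $U\supseteq V$. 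Consequently, $\lc(f,x_n)$ and $\discrim(f,x_n)$ are both nonzero throughout $V$.

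Next I would establish sign-invariance. Even-power factors are positive wherever they do not vanish. Each odd-power factor $l\in\Nproj_1(f,[x_n])$ is semi-definite on $U$ by hypothesis and nonzero on $V$, so it has a constant sign there. Together with the integer constant $a$, this makes $\lc(f,x_n)$ and $\discrim(f,x_n)$ sign-invariant on $V$; in particular $\discrim(f,x_n)$ is order-invariant, its order being identically zero. Since $\lc(f,x_n)\ne 0$ on $V$, the polynomial $f$ is degree-invariant and does not vanish identically at any point of $V$. The hypotheses of Theorem \ref{thm:McCallum} thus hold on each connected component of $V$, and yield analytic delineability of $f$ on each such component.

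The step I expect to be the main obstacle is assembling these pieces into delineability on $V$ as a whole, i.e., showing that $V$ is itself connected. Here the semi-definiteness hypothesis is crucial: because each $h\in\Nproj_1(f,[x_n])$ does not change sign on the connected open $U$, every point of $\zero(h)\cap U$ is a local extremum of $h$ and hence lies in the singular locus of $h$, which, for squarefree $h$, has real codimension at least two in $\RR^{n-1}$. Removing a closed codimension-$\geq 2$ subset from a connected open set preserves connectedness, and iterating this over the finitely many $h\in\Nproj_1(f,[x_n])$ shows that $V$ is connected. Combined with the componentwise verification above, Theorem \ref{thm:McCallum} then delivers the desired delineability of $f$ on $V$.
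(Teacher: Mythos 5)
Your proof is correct in both structure and detail, and it reconstructs the argument well even though this paper itself gives no proof of the statement (it is cited from \cite{han2012}). The three pillars are all in place: (i) every irreducible factor of $\lc(f,x_n)$ and $\discrim(f,x_n)$ lies either in $\Nproj_1(f,[x_n])={\rm Ocd}(f,x_n)$, which is excised from $V$, or in ${\rm Ecd}\setminus{\rm Ocd}$, which divides $\Nproj_2(f,[x_n])$ and is already nonzero on $U$; (ii) even-power factors are positive where nonzero and the odd-power factors have constant sign on $V$ by the semi-definiteness hypothesis, so $\discrim(f,x_n)$ is order-invariant (order $0$) and $\lc(f,x_n)\ne0$, giving degree-invariance and non-identical-vanishing of $f$; (iii) semi-definiteness of each $h\in\Nproj_1$ forces $\zero(h)\cap U$ into the singular locus of $h$, which for an irreducible (hence squarefree over $\mathbb{C}$) polynomial has real codimension at least two, so $V$ is still connected, and Theorem \ref{thm:McCallum} applies on all of $V$ at once.

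The one presentational difference worth noting: the paper's machinery for connectedness is Lemma \ref{thm:1} (removing $\zero(f,g)$ for coprime $f,g$ preserves connectedness), and the argument in \cite{han2012} would typically be routed through that lemma rather than invoking the codimension-$\ge2$ fact directly; for instance one observes that $\zero(h)\cap U\subseteq\zero(h,\partial_{i}h)$ for an irreducible $h$ with $\deg(h,x_i)>0$, where $h$ and $\partial_i h$ are coprime, and then passes from $U\setminus\zero(h,\partial_i h)$ connected to $U\setminus\zero(h)$ connected by density. Your direct appeal to the singular-locus dimension bound is a slightly more self-contained and cleaner route to the same conclusion. Also, you actually establish \emph{analytic} delineability via McCallum's theorem, which is stronger than the stated conclusion; that is harmless. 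The only point to make explicit if you were to write this out in full is that $\discrim(f,x_n)$ must be a nonzero polynomial to invoke Theorem \ref{thm:McCallum}, which follows from $f$ being squarefree of level $n$; you tacitly assume this.
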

\begin{lem} \label{prop:han2012} $\cite{han2012}$
	Given a positive integer $n\ge2$. Let $f\in\ZZ[\xx_{n}]$ be a squarefree polynomial with level $n$ and $U$ a connected open set of $\Nproj(f,[x_{n}])\neq0$ in $\RR^{n-1}$. If $f(\xx_n)$ is semi-definite on $U\times \RR$, then the polynomials in $\Nproj_{1}(f,[x_n])$ are all semi-definite on $U$.
\end{lem}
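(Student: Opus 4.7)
My plan is to argue by contradiction: assume that $f$ is semi-definite on $U\times\RR$ but some $h\in\Nproj_{1}(f,[x_n])={\rm Ocd}(f,x_n)$ changes sign on $U$. Since $h$ is squarefree with positive leading coefficient, both $\{h>0\}\cap U$ and $\{h<0\}\cap U$ are nonempty open subsets. The plan is then to pick points $\va\in\{h>0\}\cap U$ and $\vb\in\{h<0\}\cap U$ such that \emph{every} other irreducible factor appearing in $\lc(f,x_n)$ and $\discrim(f,x_n)$ (i.e.\ factors in $\sqrfree_1\cup\sqrfree_2$ of both) is nonzero at $\va$ and $\vb$. This is possible by a genericity argument: each such extraneous factor vanishes on a proper algebraic subset of $U$, and a finite union of these still misses a dense open subset of each of the two cells. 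At such $\va,\vb$, the signs of $\lc(f,x_n)$ and of $\discrim(f,x_n)$ are determined entirely by the signs of the odd-multiplicity factors.

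Next I would split into two cases. If $h\in{\rm Oc}(f,x_n)$, then $h$ is an odd-multiplicity factor of $\lc(f,x_n)$, so $\lc(f,x_n)(\va)$ and $\lc(f,x_n)(\vb)$ have opposite signs. When $\deg(f,x_n)$ is even, $f(\va,x_n)$ and $f(\vb,x_n)$ tend to opposite infinities as $x_n\to+\infty$, contradicting semi-definiteness on $U\times\RR$; when $\deg(f,x_n)$ is odd, already at $\va$ alone $f(\va,x_n)$ takes both signs, again a contradiction. If instead $h\in{\rm Od}(f,x_n)\setminus{\rm Oc}(f,x_n)$, then $\discrim(f,x_n)(\va)$ and $\discrim(f,x_n)(\vb)$ have opposite signs, while $\lc(f,x_n)$ is nonzero at both points. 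Using the classical identity $\mathrm{sign}(\discrim)=(-1)^{s}$ (where $s$ is the number of complex conjugate root pairs), the parity of $s$ differs at $\va$ and $\vb$, and hence so does the parity of the number of real roots of the squarefree univariate polynomials $f(\va,x_n)$ and $f(\vb,x_n)$. But semi-definiteness combined with squarefreeness forces each of these polynomials to have \emph{no} real roots (a simple real root would be a sign-change), so both should have the same number of real roots, namely zero. This contradiction closes Case~2.

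The main technical obstacle I expect is keeping careful track of the bookkeeping: ensuring that the chosen points $\va,\vb$ simultaneously (i)~have $h$ of the desired signs, (ii)~avoid the zero loci of every other factor of $\lc$ and $\discrim$ (including those in $\Nproj_{2}$ and those of even multiplicity), and (iii)~lie in $U$ so that $\Nproj(f,[x_n])\ne 0$. Once these genericity conditions are arranged, the sign of $\lc$ (respectively $\discrim$) at $\va$ and $\vb$ is controlled purely by the sign of $h$, which is what drives both contradictions. The remaining subtlety is the overlap case $h\in{\rm Oc}\cap{\rm Od}$, but this is absorbed into Case~1 without any extra work, since the leading coefficient argument only uses that $h$ is an odd-multiplicity factor of $\lc$.
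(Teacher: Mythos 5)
Your overall strategy is sound: assume some $h\in{\rm Ocd}(f,x_n)$ changes sign on $U$, deduce a sign change of $\lc(f,x_n)$ or of $\discrim(f,x_n)$, and derive a contradiction with semi-definiteness of $f$ on $U\times\RR$; the two endgames (the degree-parity analysis for $\lc$, and the parity of the number of real roots via $\operatorname{sign}(\discrim)=(-1)^s$ for $\discrim$) are both correct once the setup is in place. (The paper itself only cites \cite{han2012} for this lemma and gives no proof, so I am judging the argument on its own.)

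However, there is a genuine gap in the choice of the points $\va$ and $\vb$. You take $\va\in\{h>0\}\cap U$ and $\vb\in\{h<0\}\cap U$, both avoiding the zero sets of every \emph{other} irreducible factor of $\lc(f,x_n)$ and $\discrim(f,x_n)$, and then conclude that $\lc(f,x_n)$ (resp.\ $\discrim(f,x_n)$) has opposite signs at $\va$ and $\vb$ because $h$ occurs to odd multiplicity. This does not follow. Genericity guarantees only that the other factors are \emph{nonzero} at $\va$ and $\vb$, not that they have the \emph{same} sign there: another odd-multiplicity factor $h'$ of $\lc(f,x_n)$ might also change sign between your two points, in which case the product $\lc(f,x_n)$ has the same sign at $\va$ and at $\vb$ and no contradiction arises. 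The fix is local rather than global: since $h$ changes sign on the connected open set $U$ and $h$ is irreducible and coprime to every other factor $g$, the set of smooth points of $\zero(h)\cap U$ at which $h$ changes sign transversally has dimension $n-2$, whereas $\zero(h)\cap\zero(g)$ has dimension at most $n-3$ for each such $g$; hence one can choose a point $\gamma\in\zero(h)\cap U$ at which $h$ changes sign and no other irreducible factor of $\lc(f,x_n)\cdot\discrim(f,x_n)$ vanishes. Take a small ball $B\subset U$ around $\gamma$ on which every factor $g\ne h$ keeps a constant nonzero sign, and pick $\va,\vb\in B$ on opposite sides of $\zero(h)$. Now $\lc(f,x_n)$ (resp.\ $\discrim(f,x_n)$) really does change sign from $\va$ to $\vb$, and the remainder of your argument goes through unchanged.
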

Now, we can rewritten Theorem \ref{thm:2} in another way. 
\begin{prop}\label{pr:6}
	Let $f\in\ZZ[\xx_{n}]$ be a squarefree polynomial with level $n$ and $U$ a connected component of $\Nproj(f,[x_{n}])\neq0$ in $\RR^{n-1}$. If the polynomials in $\Nproj_{1}(f,[x_n])$ are semi-definite on $U$, then $f$ is open delineable on $U$ w.r.t. $\overline{\Nproj}(f,n)$ and $\widetilde{\Nproj}(f,n)$.
\end{prop}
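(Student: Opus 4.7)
The plan is to reformulate Theorem~\ref{thm:2} in the language of open delineability. In the present setting $j = n-1$, $L_1 = [f,\Nproj(f,[x_n])]$ and $L_2 = [f,\Nproj(f,[x_n],x_n)]$; by construction $\Nproj(f,[x_n],x_n) = \prod_{h\in\Nproj_1(f,[x_n])} h$, so a point $\va\in U$ satisfies $\Nproj(f,[x_n])(\va)\cdot\Nproj(f,[x_n],x_n)(\va)\ne 0$ precisely when $\va\in V := U\setminus \bigcup_{h\in\Nproj_1(f,[x_n])}\zero(h)$. Unpacking the open-delineability definition, the target reduces to: for every such $\va$ and every maximal open connected $U'\subset\RR^n$ with $f\ne 0$ on $U'$ and $U'\cap(U\times\RR)\ne\emptyset$, the lifted sample $\{\va\}\times{\tt SPOne}(f(\va,x_n),f(\va,x_n))$ meets $U'$. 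Applying Theorem~\ref{thm:2} directly yields that $f$ is delineable on $V$, producing continuous root functions $\theta_1<\cdots<\theta_t$ on $V$ and a decomposition of $\{f\ne 0\}\cap(V\times\RR)$ into disjoint $f$-sectors, each projecting surjectively onto its component of $V$.

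The decisive step is then to show that $V$ is open connected and dense in $U$. This is where the semi-definiteness hypothesis is used a second time: for a squarefree polynomial $h$ that is sign-invariant on the connected open set $U$, $h$ cannot vanish to odd order along a smooth codimension-one real stratum without changing sign, so $\zero(h)\cap U$ must have real codimension at least two. Removing finitely many closed subsets of codimension $\ge 2$ from the connected open set $U$ preserves both connectedness and density, so $V$ is open connected and dense in $U$.

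With $V$ connected and dense, the remainder is immediate. Density gives $U'\cap(V\times\RR)\ne\emptyset$; since $U'$ is a maximal connected subset of $\{f\ne 0\}$ and the $f$-sectors over $V$ are connected, $U'$ must contain an entire $f$-sector over $V$, which in turn projects surjectively onto $V$. Hence $U'\cap(\{\va\}\times\RR)$ is a nonempty union of open intervals of $\{f(\va,x_n)\ne 0\}$. The specification of ${\tt SPOne}(f(\va,x_n),f(\va,x_n))$ guarantees at least one sample point in each such interval, so the lift meets $U'$, establishing open delineability with respect to $\overline{\Nproj}(f,n)$ and $\widetilde{\Nproj}(f,n)$.

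The main obstacle I expect is the codimension-$\ge 2$ claim used to conclude connectedness of $V$, because the polynomials $h\in\Nproj_1(f,[x_n])$ are squarefree but possibly reducible; one has to argue that sign-invariance of such an $h$ on $U$ still precludes real codimension-one components of $\zero(h)\cap U$, e.g.\ by factoring $h$ into distinct irreducibles and observing that any two such factors intersect in codimension $\ge 2$, so at a generic smooth codimension-one point of $\zero(h)$ exactly one factor vanishes (to order one), forcing $h$ to change sign and contradicting the semi-definiteness of $h$ on $U$.
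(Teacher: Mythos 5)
Your proof is correct and follows the route the paper indicates: Proposition~\ref{pr:6} is presented as a mere reformulation of Theorem~\ref{thm:2}, with no proof given, so the substance lies in translating delineability of $f$ on $V = U\setminus\bigcup_{h\in\Nproj_1(f,[x_n])}\zero(h)$ into open delineability of $f$ on $U$ w.r.t.\ the two lists, which you do correctly once $V$ is shown to be open, connected, and dense in $U$; the codimension-$\ge 2$ argument you supply for this is exactly the point that the paper and its reference \cite{han2012} rely on. One small observation regarding the concern you raise at the end: the elements of $\Nproj_1(f,[x_n])$ are not merely squarefree but irreducible, since they belong to $\sqrfree_1(\lc(f,x_n))$ or $\sqrfree_1(\discrim(f,x_n))$, which by Definition~\ref{de:sqrfree} consist of single irreducible factors, so the reducible-squarefree case does not actually arise here (though the argument you sketch handles it correctly anyway).
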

Notice that the proof of Theorem \ref{thm:open delineable} only uses the properties of open delineable (Propositions 1-4) and Proposition \ref{prop:open}, and Proposition \ref{pr:6} is similar to Proposition \ref{prop:open}. We can prove the following theorem by the same way of proving Theorem \ref{thm:open delineable}.

\begin{thm} \label{thm:nprojopen}
	Let $j$ be an integer and $2\le j\le n$. For any given polynomial $f(\xx_n)\in \ZZ[\bm{x}_n]$, and any open connected set $U$ of $\Nproj(f,[x_n,\ldots,x_j])\neq0$ in $\RR^{j-1}$, let $S=U\backslash\zero(\{\Nproj(f,[x_n,\dots,x_j],x_t)\mid t=j,\ldots,n\})$. If the polynomials in $\bigcup_{i=0}^{n-j} \Nproj_{1}(f,[x_{n-i}])$ are all semi-definite on $U\times \RR^{n-j}$, $f(\xx_n)$ is open delineable on $S$ w.r.t. $\overline{\Nproj}(f,j)$ and $\widetilde{\Nproj}(f,j)$.
\end{thm}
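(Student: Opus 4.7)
The plan is to follow the proof of Theorem \ref{thm:open delineable} almost line by line, replacing every invocation of Proposition \ref{prop:open} by an invocation of Proposition \ref{pr:6}. The induction variable is again $k = n - j$, and the global semi-definiteness hypothesis on $U\times\RR^{n-j}$ is what certifies that Proposition \ref{pr:6} is applicable at every step of the recursion.

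First I would verify that $S$ is open and connected. The polynomials $\Nproj(f,[x_n,\dots,x_j],x_t)$ for $t = j,\dots,n$ have gcd equal (up to constants) to $\Nproj(f,[x_n,\dots,x_j])$, which is nonzero on $U$ by assumption, so Lemma \ref{lem:connected} yields connectedness of $S$. For the base case $k=0$ (that is, $j=n$), the hypotheses of the theorem collapse exactly to those of Proposition \ref{pr:6}, which gives open delineability of $f$ on $U$ w.r.t. $\overline{\Nproj}(f,n)$ and $\widetilde{\Nproj}(f,n)$.

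For the inductive step, assume the theorem at all indices strictly above $i$ and set $[\zz] = [x_n,\dots,x_i]$. I would show that each $\va \in S$ has an open connected neighborhood $S_{\va} \subseteq S$ on which $f$ is open delineable w.r.t. $\overline{\Nproj}(f,i)$ and $\widetilde{\Nproj}(f,i)$, and then invoke the nonempty intersection property (Proposition \ref{prop:nonempty}) together with connectedness of $S$ to globalize. When $\Nproj(f,[\zz],x_i)(\va) \ne 0$, I would apply the inductive hypothesis to $\Nproj(f,[x_n,\dots,x_{i+1}])$ on a neighborhood where $\Nproj(f,[\zz],x_i)$ does not vanish, and splice the two levels together using the transitive property (Proposition \ref{prop:transitive}). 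When $\Nproj(f,[\zz],x_i)(\va) = 0$, pick an $i' \in \{i+1,\dots,n\}$ with $\Nproj(f,[\zz],x_{i'})(\va) \ne 0$, apply the preceding case to $f(\sigma(\xx_n))$ for a permutation $\sigma \in P_{n,i}$ swapping $x_i$ with $x_{i'}$, and then transport the conclusion back to $f$ via the union property (Proposition \ref{prop:union}).

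The delicate point, and the main obstacle I expect, is to check that the semi-definiteness hypothesis propagates cleanly through both the recursion and the permutation. When the inductive hypothesis is invoked on $\Nproj(f,[x_n,\dots,x_{i+1}])$, the corresponding secondary polynomials must still be semi-definite on the relevant cylinder; and in the permuted sub-case one must verify that after applying $\sigma$ the new $\Nproj_1$ sets remain semi-definite on $U\times\RR^{n-j}$. Both facts are forced by the global assumption that $\bigcup_{i=0}^{n-j}\Nproj_1(f,[x_{n-i}])$ is semi-definite on $U\times\RR^{n-j}$, combined with the observation that any $\sigma \in P_{n,i}$ fixes the coordinates $x_1,\dots,x_{i-1}$ and so leaves the cylinder $U\times\RR^{n-j}$ invariant. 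Writing out this bookkeeping carefully at each inductive level is really the only nontrivial extra ingredient beyond the pattern already established in the proof of Theorem \ref{thm:open delineable}.
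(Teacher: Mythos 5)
Your proposal follows exactly the route the paper itself indicates: the paper's own ``proof'' of Theorem~\ref{thm:nprojopen} is the one-sentence remark that the argument for Theorem~\ref{thm:open delineable} uses only the four properties of open delineability (Propositions~\ref{prop:sample}--\ref{prop:union}) plus a single base-case proposition, and that replacing Proposition~\ref{prop:open} by Proposition~\ref{pr:6} yields the new statement. Your base case, the two-case dichotomy on whether $\Nproj(f,[\zz],x_i)(\va)$ vanishes, the transport across a permutation $\sigma\in P_{n,i}$, and the final gluing via Proposition~\ref{prop:nonempty} all mirror the paper's argument verbatim, and your use of Lemma~\ref{lem:connected} to establish connectedness of $S$ is identical.

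One small point is worth clarifying about the ``delicate point'' you raise. You worry that invoking the induction on $g=\Nproj(f,[x_n,\dots,x_{i+1}])$ requires the secondary polynomials $\Nproj_1(g,[x_i])$ to be semi-definite. In fact this is not needed: by Definition of $\Nproj$, the recursion bottoms out with $\Nproj_2$ applied only once, directly to $f$, and every subsequent projection step is plain $\Bproj$. Hence the step from $g$ down to $\Nproj(f,[\zz],x_i)=\Bproj(g,[x_i])$ is governed by Proposition~\ref{prop:open} (the unconditional Brown/McCallum statement), not by Proposition~\ref{pr:6}. The semi-definiteness hypothesis of the theorem is used only where $\Nproj_1(f,[x_t])$ actually appears, namely in the first projection away from $f$ in each variable $x_t$ for $t\in\{j,\dots,n\}$, and these are exactly the sets covered by the global assumption $\bigcup_{\ell=0}^{n-j}\Nproj_1(f,[x_{n-\ell}])$. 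Your observation that $\sigma\in P_{n,i}$ fixes the cylinder and merely permutes those sets then handles the permuted sub-case exactly as you describe, so the conclusion you reach is correct; the concern you flag for the intermediate polynomial $g$ is, happily, not an obstacle at all.
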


Theorem \ref{thm:nprojopen} and Proposition \ref{prop:han2012} provide us a new way to decide the non-negativity of a polynomial as stated in the next theorem.

\begin{thm}\label{th:6}
	Given a positive integer $n$. Let $f\in\ZZ[\xx_{n}]$ be a squarefree polynomial with level $n$ and $U$ a connected open set of $\Nproj(f,[x_{n},\ldots,x_j])\neq0$ in $\RR^{j-1}$. Denote $S=U\backslash\zero(\{\Nproj(f,[x_n,\dots,x_j],x_t)\mid t=j,\ldots,n\})$. The necessary and sufficient condition for $f(\xx_n)$ to be positive semi-definite on $U\times \RR^{n-j+1}$ is the following two conditions hold.\\
	$(1)$The polynomials in $\bigcup_{i=0}^{n-j} \Nproj_{1}(f,[x_{n-i}])$ are all semi-definite on $U\times \RR^{n-j}$.\\
	$(2)$There exists a point $\va \in S$ such that $f(\va,x_j,\ldots,x_n)$ is positive semi-definite on $\RR^{n-j+1}$.
\end{thm}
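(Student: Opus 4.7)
My plan is to prove both directions of the iff separately, leveraging Theorem~\ref{thm:nprojopen} for sufficiency and Lemma~\ref{prop:han2012} for necessity, together with the density of $S$ in $U$.

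For the necessity direction, assume $f \ge 0$ on $U \times \RR^{n-j+1}$. Condition (2) is essentially free: the set $S$ is obtained from $U$ by removing the zero loci of finitely many nonzero polynomials (namely $\Nproj(f,[x_n,\ldots,x_j],x_t)$ for $t = j,\ldots,n$), so $S$ is Zariski open and dense in $U$, hence nonempty; any $\va\in S$ works. Condition (1) follows by applying Lemma~\ref{prop:han2012} iteratively down the projection tower. At the top level, view $f$ as a level-$n$ polynomial and split the product $U\times\RR^{n-j+1}=(U\times\RR^{n-j})\times\RR$; the lemma then yields that every polynomial in $\Nproj_1(f,[x_n])$ is semi-definite on the connected components of $(U\times\RR^{n-j})\setminus\zero(\Nproj(f,[x_n]))$, and by continuity together with connectedness this extends to all of $U\times\RR^{n-j}$. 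The same argument, applied to the intermediate projected polynomial at each subsequent level, propagates semi-definiteness down through all indices $i=0,\ldots,n-j$.

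For the sufficiency direction, assume (1) and (2), and suppose for contradiction that $f$ is not PSD on $U\times\RR^{n-j+1}$. Pick $p=(\vb,\vec{\gamma})\in U\times\RR^{n-j+1}$ with $f(p)<0$, and let $W$ be the maximal open connected component of $\{q\in\RR^n : f(q)\neq 0\}$ containing $p$; since $f$ has constant sign on $W$, we have $f<0$ throughout $W$. Because $S$ is dense in $U$, the set $S\times\RR^{n-j+1}$ is dense in $U\times\RR^{n-j+1}$, so the open set $W$ must meet $S\times\RR^{n-j+1}$. Now invoke condition (1) together with Theorem~\ref{thm:nprojopen}: $f$ is open delineable on $S$ w.r.t.\ $\overline{\Nproj}(f,j)$ and $\widetilde{\Nproj}(f,j)$. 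Let $\va\in S$ be the witness of (2); by the open delineability definition applied to $W$, every $\mathcal{A}=\texttt{OpenSP}(\overline{\Nproj}(f,j),\widetilde{\Nproj}(f,j),\{\va\})$ satisfies $\mathcal{A}\cap W\ne\emptyset$. Any such sample point $q\in\mathcal{A}\cap W$ has its first $j-1$ coordinates equal to $\va$ by construction of \texttt{OpenSP}, so writing $q=(\va,\delta_j,\ldots,\delta_n)$ we obtain $f(\va,\delta_j,\ldots,\delta_n)=f(q)<0$, contradicting (2).

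The main obstacle I anticipate is the necessity direction. Lemma~\ref{prop:han2012} is stated with the hypothesis that $U'$ lies in the nonvanishing locus of $\Nproj(\cdot,[x_n])$, but at intermediate stages of the tower we only directly know that the full iterated projection $\Nproj(f,[x_n,\ldots,x_j])$ is nonzero on $U$, not that each intermediate projection is nonzero everywhere on $U\times\RR^{\text{something}}$. The right way to handle this is to apply the lemma locally on each connected component of the complement of the intermediate zero set, then use continuity of the projection polynomials together with connectedness of $U\times\RR^{n-j}$ to conclude semi-definiteness on the full product; a semi-definite continuous function on a dense subset of a connected set is semi-definite everywhere. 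By contrast, the sufficiency direction is essentially an open-sample argument and requires no new ingredients beyond Theorem~\ref{thm:nprojopen} and the density observation.
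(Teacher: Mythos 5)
Your overall plan matches the route the paper itself indicates (it states the theorem without proof, pointing only to Theorem~\ref{thm:nprojopen} and Lemma~\ref{prop:han2012} as ingredients), and the sufficiency direction is sound: once you have open delineability of $f$ on $S$, the density of $S$ in $U$ ensures that any sign-negative component $W$ of $\{f\neq 0\}$ meeting $U\times\RR^{n-j+1}$ also meets $S\times\RR^{n-j+1}$, so the \texttt{OpenSP} sample over the witness $\va$ hits $W$ and contradicts $(2)$; you also correctly check that $\va$ satisfies the nonvanishing side conditions $f_{j-1}(\va)g_{j-1}(\va)\neq 0$ needed to invoke Definition~\ref{def:opendeli}.

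The necessity direction, however, has a real gap exactly where you flag an ``anticipated obstacle,'' and the resolution you sketch does not close it. What Lemma~\ref{prop:han2012} gives you is that each $g\in\Nproj_1(f,[x_n])$ is semi-definite \emph{on each connected component} $V$ of $(U\times\RR^{n-j})\setminus\zero(\Nproj(f,[x_n]))$ separately. That is weaker than ``$g$ is semi-definite on the dense subset $(U\times\RR^{n-j})\setminus\zero(\Nproj(f,[x_n]))$'': nothing you have said rules out $g\geq 0$ on one component and $g\leq 0$ on a neighbouring one, in which case $g$ is not semi-definite on $U\times\RR^{n-j}$ and the continuity-plus-density step simply does not apply. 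Your closing slogan, ``a semi-definite continuous function on a dense subset of a connected set is semi-definite everywhere,'' is true only if semi-definiteness has already been established on the whole dense subset at once, which is precisely what is missing. To repair this one must use that each $g\in\Nproj_1(f,[x_n])$ is coprime to $\Nproj(f,[x_n])=\Nproj_2(f,[x_n])$ (by the very construction of $\Nproj_1$ and $\Nproj_2$ from disjoint sets of irreducible factors), and then argue via Lemma~\ref{thm:1} that $(U\times\RR^{n-j})\setminus\zero(g,\Nproj(f,[x_n]))$ is connected, so any change in the sign of $g$ along a path in this set would force a zero of $g$ inside a single component $V$ on which $g$ is semi-definite, contradicting a strict sign change there. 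Without invoking coprimality and Lemma~\ref{thm:1} (or an equivalent codimension argument), the extension to the full cylinder is not justified.
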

Based on the above theorems, it is easy to design some different algorithms (depending on the choice of $j$) to prove polynomial inequality. For example, the algorithm \TwoPro\ for deciding whether a polynomial is positive semi-definite, which we will introduce later, is based on Theorem \ref{th:6} when $j=n-1$ (Proposition \ref{prop:nproj2}).
\begin{prop}\label{prop:nproj2}
	Given a positive integer $n\ge3$. Let $f\in\ZZ[\xx_{n}]$ be a squarefree polynomial with level $n$ and $U$ a connected open set of $\Nproj(f,[x_{n},x_{n-1}])\neq0$ in $\RR^{n-2}$. Denote $S=U\backslash \zero(\Nproj(f,[x_n,x_{n-1}],x_n),\Nproj(f,[x_n,x_{n-1}],x_{n-1}))$.\\ The necessary and sufficient condition for $f(\xx_n)$ to be positive semi-definite on $U\times \RR^2$ is the following two conditions hold.\\
	$(1)$The polynomials in either $\Nproj_{1}(f,[x_n])$ or $\Nproj_{1}(f,[x_{n-1}])$ are semi-definite on $U\times \RR$.\\
	$(2)$There exists a point $\va\in S$ such that $f(\va,x_{n-1},x_n)$ is positive semi-definite on $\RR^2$.
\end{prop}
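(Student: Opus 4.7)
My plan is to prove the two directions separately, in each case following the template of Theorem \ref{th:6} at level $j=n-1$ but exploiting the $x_n\leftrightarrow x_{n-1}$ symmetry of the reduced projection to weaken hypothesis (1) from a conjunction to a disjunction. For necessity, I would first note that by Lemma \ref{lem:connected} applied to $\Nproj(f,[x_n,x_{n-1}])$ on $U$, the set $S$ is nonempty and connected, so any $\va\in S$ trivially certifies (2). For (1), I would invoke Proposition \ref{prop:han2012} componentwise on $\{\Nproj(f,[x_n])\neq0\}\cap(U\times\RR)$: on each connected piece $W$, the PSD-ness of $f$ on $W\times\RR$ forces every $g\in\Nproj_1(f,[x_n])$ to be semi-definite on $W$. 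To upgrade this to semi-definiteness on the full connected set $U\times\RR$, I would use that $\Nproj_1(f,[x_n])={\rm Ocd}(f,x_n)$ and $\Nproj(f,[x_n])=\Nproj_2(f,[x_n])$ are drawn from disjoint (odd- vs.\ even-multiplicity) factor families, so $g$ and $\Nproj(f,[x_n])$ are coprime; crossings of $\zero(\Nproj(f,[x_n]))$ inside $U\times\RR$ therefore cannot flip the sign of $g$, and connectedness of $U\times\RR$ finishes the argument.

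For sufficiency, I would assume without loss of generality that $\Nproj_1(f,[x_n])$ is semi-definite on $U\times\RR$, since $\Nproj(f,[x_n,x_{n-1}])$, its two gcd constituents, $S$, and condition (2) are all invariant under swapping $x_n$ and $x_{n-1}$. Setting $g:=\Nproj(f,[x_n])$, a squarefree polynomial of level at most $n-1$, I would apply Proposition \ref{pr:6} (on components of $\{g\neq0\}$ restricted to $U\times\RR$, where the semi-definiteness hypothesis is valid) to get open delineability of $f$ with respect to $[f,g]$ and $[f,\Nproj(f,[x_n],x_n)]$, and Theorem \ref{thm:Brown} (or Proposition \ref{prop:open} applied to $g$) to get analytic, hence open, delineability of $g$ on connected open subsets of $\{\Bproj(g,x_{n-1})\neq0\}$ with respect to $[g,\Bproj(g,x_{n-1})]$. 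Combining via the transitive property (Proposition \ref{prop:transitive}) then yields open delineability of $f$ on $S$ w.r.t.\ the project-$x_n$-first lists $[f,g,\Bproj(g,x_{n-1})]$ and $[f,\Nproj(f,[x_n],x_n),\Bproj(g,x_{n-1})]$. Since $\Bproj(g,x_{n-1})=\Nproj(f,[x_n,x_{n-1}],x_{n-1})$ and $\Nproj(f,[x_n,x_{n-1}])$ divides this polynomial and is nonzero on $S$, replacing the bottom entry of $L_1$ by $\Nproj(f,[x_n,x_{n-1}])$ only strengthens the (vacuous on $S$) initial-point condition in Definition \ref{def:opendeli} and leaves the lifting unchanged, so open delineability on $S$ holds with respect to $\overline{\Nproj}(f,n-1)$ and $\widetilde{\Nproj}(f,n-1)$. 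To conclude, take the $\va\in S$ furnished by (2); the sample points constructed by ${\tt OpenSP}$ above $\va$ satisfy $f\neq0$ by construction and lie in the fiber $\{\va\}\times\RR^2$ where $f\geq0$, so $f>0$ at each sample. Open delineability then forces $f>0$ on every maximal open connected component of $\{f\neq0\}$ meeting $S\times\RR^2$, hence $f\geq0$ on $S\times\RR^2$; continuity together with density of $S$ in $U$ yields $f\geq0$ on $U\times\RR^2$.

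The main obstacle I anticipate is the necessity-direction upgrade of semi-definiteness of each $g\in\Nproj_1(f,[x_n])$ from components of $\{\Nproj(f,[x_n])\neq0\}\cap(U\times\RR)$ to all of $U\times\RR$, where the coprimality of $g$ with $\Nproj(f,[x_n])$ must be combined with a careful connectedness argument modeled on the one behind Lemma \ref{lem:connected} to rule out sign flips across the projection zero set. A secondary technical point on the sufficiency side is justifying the swap of $\Bproj(g,x_{n-1})$ for $\Nproj(f,[x_n,x_{n-1}])$ at the bottom of $L_1$, which calls for unpacking Definition \ref{def:opendeli} to confirm that the bottom polynomials enter only through the initial constraint $f_j(\va)g_j(\va)\neq0$ and not through the lifting itself.
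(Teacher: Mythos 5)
The paper does not give Proposition~\ref{prop:nproj2} a stand-alone proof: it is presented as the $j=n-1$ specialization of Theorem~\ref{th:6}, which in turn rests on Theorem~\ref{thm:nprojopen} and Lemma~\ref{prop:han2012}. In that setting, ``the polynomials in either $\Nproj_1(f,[x_n])$ or $\Nproj_1(f,[x_{n-1}])$'' is read as ``the polynomials in $\Nproj_1(f,[x_n])\cup\Nproj_1(f,[x_{n-1}])$'' so as to match condition~$(1)$ of Theorem~\ref{th:6} at $j=n-1$. Algorithm~\TwoPro\ confirms this reading: it calls itself recursively on \emph{every} polynomial in $\Nproj_1(f,[x_n])\cup\Nproj_1(f,[x_{n-1}])$. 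You have instead read the ``either/or'' as a genuine disjunction and attempted to prove this strictly stronger claim directly.

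Your necessity argument is essentially sound (and, since necessity is easier for the disjunction reading, also harmless): applying Lemma~\ref{prop:han2012} on each connected piece of $\{\Nproj(f,[x_n])\neq 0\}\cap(U\times\RR)$ and then propagating semi-definiteness across $\zero(\Nproj(f,[x_n]))$ via coprimality and Lemma~\ref{thm:1} is exactly the right mechanism, though it should be spelled out that a sign change of $g$ along a path in $(U\times\RR)\setminus\zero(g,\Nproj(f,[x_n]))$ would produce a zero of $g$ in the interior of some piece where $g$ already changes sign locally, contradicting semi-definiteness there.

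The sufficiency direction has a genuine gap. After the WLOG you retain only the hypothesis ``$\Nproj_1(f,[x_n])$ is semi-definite on $U\times\RR$'', and you run the $x_n$-then-$x_{n-1}$ projection only. The resulting open delineability is w.r.t. bottom entry $\Bproj(g,x_{n-1})=\Nproj(f,[x_n,x_{n-1}],x_{n-1})$. But $S$ removes from $U$ only the \emph{common} zeros $\zero(\Nproj(f,[x_n,x_{n-1}],x_n),\Nproj(f,[x_n,x_{n-1}],x_{n-1}))$; your assertion that the initial-point constraint $f_j(\va)g_j(\va)\neq 0$ becomes ``vacuous on $S$'' is therefore false, because $\Nproj(f,[x_n,x_{n-1}],x_{n-1})$ can and generically will vanish on a codimension-one subset of $S$. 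At those points the chain $[f,g,\Bproj(g,x_{n-1})]$ cannot be started, and the components of $\{\Bproj(g,x_{n-1})\neq 0\}$ inside $S$ are disconnected from one another; your argument proves open delineability on each such component but gives no way to glue them. The paper's mechanism for gluing is precisely Proposition~\ref{prop:union} (the union property), as used in the proof of Theorem~\ref{thm:open delineable} and hence of Theorem~\ref{thm:nprojopen}: at a bad point one switches to the $x_{n-1}$-then-$x_n$ order. But that switch requires the delineability input for the \emph{other} order, i.e.\ $\Nproj_1(f,[x_{n-1}])$ semi-definite on $U\times\RR$ as well, so the two hypotheses cannot be weakened to a disjunction in this proof strategy. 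To repair your proof you should restore the conjunction (matching Theorem~\ref{th:6}) and then invoke Proposition~\ref{prop:union} to pass from components of $\{\Bproj(g,x_{n-1})\neq0\}$ to all of $S$, exactly as in the proof of Theorem~\ref{thm:open delineable}.
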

\begin{algorithm}
	\caption{\TwoPro} \label{TwoPro}
	\begin{algorithmic}[1]
		\REQUIRE{An irreducible polynomial $f \in \ZZ[\xx_n]$.}
		\ENSURE{Whether or not $\forall \va_n\in \RR^n$, $f(\va_n)\ge0.$}
        \IF {$n\le2$}
        \IF {$\Proineq(f(x_n))$=\textbf{false}}
        \RETURN  \textbf{false}
        \ENDIF
        \ELSE
		\STATE $L_1:=\Nproj_{1}(f,[x_n])\bigcup \Nproj_{1}(f,[x_{n-1}])$
		\STATE $L_2:=\Nproj(f,[x_{n},x_{n-1}])$
		\FOR {$g$ in $L_1$}
		\IF {$\TwoPro(g)=$\textbf{false}}
		\RETURN  \textbf{false}
		\ENDIF
		\ENDFOR
		\STATE $C_{n-2}:=$ A reduced open CAD of $L_2$ w.r.t. \\
$[x_{n-2},\ldots,x_2]$, which satisfies that \\
$\zero(\Nproj(f,[x_n,x_{n-1}],x_n),\Nproj(f,[x_n,x_{n-1}],x_{n-1}))$\\
$\cap C_{n-2}=\emptyset$.
		\IF{$\exists \va_{n-2}\in C_{n-2}$ such that \\
$\Proineq(f(\va_{n-2},x_{n-1},x_n))$=\textbf{false}}
		\RETURN \textbf{false}
        \ENDIF
		\ENDIF\\
		\RETURN\textbf{true}
	\end{algorithmic}
\end{algorithm}

\section{Examples}
\label{sec:applicat}
The Algorithm \TwoHp\ and Algorithm \TwoPro\ have been implemented as two programs using Maple. In this section, we report the performance of the two programs, respectively.
All the timings in the tables are in seconds.

\begin{ex}\label{ex:61}
In this example, we compare the performance of Algorithm \TwoHp\ with open CAD on randomly generated polynomials. All the data in this example were obtained on a PC with Intel(R) Core(TM) i5 3.20GHz CPU, 8GB RAM, Windows 7 and Maple 17.

In the following table, we list the average time of projection phase and lifting phase, and the average number of sample points on $30$ random polynomials with $4$ variables and degree $4$ generated by {\tt randpoly([x,y,z,w],degree=4)-1}.

\begin{center}
		\begin{tabular}{ccccc}
			\hline & {\rm Projection} &{\rm Lifting}& {\rm Sample\ points}  & \\
			\hline
			$\TwoHp$           &$0.13$& $ 0.29$& $262$&\\
			${\tt open\ CAD}$         & $0.19$&$ 3.11$&$ 486$&\\
			\hline
		\end{tabular}
	\end{center}
If we get random polynomials with $5$ variables and degree $3$ by the command ${\tt randpoly([seq(x[i],i=1..5)], degree=3)}$, then the degrees of some variables are usually one. That makes the computation very easy for both \TwoHp\ and open CAD. Therefore,
we run the command ${\tt randpoly([seq(x[i],}$ ${\tt i=1..5)], degree=3)+add(x[i]^2,i=1..5)-1}$ 
 ten times to generate $10$ random polynomials with $5$ variables and degree $3$. 
The data on the $10$ polynomials are listed in the following table.
\begin{center}
		\begin{tabular}{ccccc}
			\hline & {\rm Projection} &{\rm Lifting}& {\rm Sample\ points}  & \\
			\hline
			$\TwoHp$           &$2.87$& $ 3.51$& $2894$&\\
			${\tt open\ CAD}$         & $0.76$&$ 12.01$&$7802$&\\
			\hline
		\end{tabular}
	\end{center}
For many random polynomials with $4$ variables and degree greater than $4$ (or $5$ variables and degree greater than $3$), neither \TwoHp\ nor open CAD can finish computation in reasonable time. 
\end{ex}

A main application of the new projection operator \Hproj\ is testing semi-definiteness of polynomials. 
Now, we illustrate the performance of our implementation of Algorithm \TwoPro\ with several non-trivial examples. For more examples, please visit the homepage\footnote{\url{https://sites.google.com/site/jingjunhan/home/software}} of the first author.

We report the timings of the program \TwoPro, the program \Proineq\ \cite{han2012}, the function PartialCylindricalAlgebraicDecomposition (\PCAD) in Maple 15, function FindInstance (\FI) in Mathematica 9, QEPCAD B (\QEPCAD), the program \RAGlib \footnote{\RAGlib\ release 3.19.4 (Oct., 2012).}, and {\tt SOSTOOLS} in MATLAB \footnote{The MATLAB version is R2011b, SOSTOOLS's version is 3.00 and SeDuMi's version is 1.3.} on these examples.

\QEPCAD\ and {\tt SOSTOOLS} were performed on a PC with Intel(R) Core(TM) i5 3.20GHz CPU, 4GB RAM and ubuntu.
The other computations were performed on a laptop with Inter Core(TM) i5-3317U 1.70GHz CPU, 4GB RAM, Windows 8 and Maple 15.

\begin{ex}\label{ex:62} $\cite{han2011}$ Prove that
	$$F(\bm{x}_{n})=(\sum_{i=1}^nx_i^2)^2-4\sum_{i=1}^n x_i^2x_{i+1}^2\ge 0,$$
	where $x_{n+1}=x_1$.

Hereafter ``$\infty$" means either the running time is over 4000 seconds or the software is failure to get an answer.
	\begin{center}
		\begin{tabular}{lllllll}
			\hline$n$ & 5 & 8& 11  & 17  &23  \\
			\hline
			$\TwoPro$        &0.28 &0.95&6.26&29.53&140.01 \\
			$\RAGlib$        &6.98 &177.75& $\infty$&$\infty$&$\infty$\\
			$\Proineq$       &0.29 &$\infty$&$\infty$&$\infty$&$\infty$\\
            $\FI$            &0.10 &$\infty$ &$\infty$&$\infty$&$\infty$\\
            $\PCAD$          &0.26 &$\infty$ &$\infty$&$\infty$&$\infty$\\
            $\QEPCAD$        &0.10 &$\infty$ &$\infty$&$\infty$&$\infty$\\
            {\tt SOSTOOLS}   &0.23 & 1.38    & 3.94  & 247.56  & $\infty$\\
			\hline
		\end{tabular}
	\end{center}

We then test the semi-definiteness of the polynomials (In fact, all $G(\bm{x}_{n})$ are indefinite.)
$$G(\bm{x}_{n})=F(\bm{x}_{n})-\frac{1}{10^{10}}x_1^4.$$
The timings are reported in the following table.

	\begin{center}
		\begin{tabular}{llllllll}
			\hline$n$ & \TwoPro & \RAGlib&   \Proineq &  \FI& \PCAD&\QEPCAD\\
			\hline
			$20$        &3.828&$\infty$& $\infty$& $\infty$&$\infty$&$\infty$\\
			$30$       &13.594&$\infty$& $\infty$& $\infty$&$\infty$&$\infty$\\
           \hline
		\end{tabular}
	\end{center}
\end{ex}

\begin{ex} Prove that
	$$B(\bm{x}_{3m+2})=(\sum_{i=1}^{3m+2}x_i^2)^2-2\sum_{i=1}^{3m+2}x_i^2\sum_{j=1}^mx_{i+3j+1}^2\ge 0,$$
	where $x_{3m+2+r}=x_r$.
	If $m=1$, it is equivalent to the case $n=5$ of Example \ref{ex:62}. This form was once studied in $\cite{parrilo2000structured}$.
\end{ex}
	\begin{center}
		\begin{tabular}{llllllll}
			\hline$m$ & \TwoPro& \RAGlib&   \Proineq   &  \FI& \PCAD&\QEPCAD                \\
			\hline
			$1$        &0.296&6.9&0.297&0.1&0.26&0.104 \\
			$2$        &1.390&144.9&23.094&$\infty$&$\infty$&$\infty$\\
			$3$        &9.672&2989.5&$\infty$&$\infty$&$\infty$&$\infty$\\
			\hline
		\end{tabular}
	\end{center}

\begin{rem}
For some special examples like Example \ref{ex:62}, \TwoPro\ could solve problems with more than 30 variables efficiently. Of course, there also exist some other examples on which \TwoPro\ performs badly. For example, \TwoPro\ could not solve the problems in \cite{kaltofen2009proof} within $4000$ seconds while they can be solved by \RAGlib\ efficiently.

As showed by Example \ref{ex:61}, according to our experiments, the application of \TwoHp\ and \TwoPro\ is limited at $3$-$4$ variables and low degrees generally.
It is not difficult to see that, if the input polynomial $f(\xx_n)$ is symmetric, the new projection operator \Hproj\ cannot reduce the projection scale and the number of sample points.
Thus, it is reasonable to conclude that the complexity of \TwoPro\ is still doubly exponential.
\end{rem}

\section{Conclusion}

In this paper, we propose a new projection operator $\Hproj$ based on Brown's operator and gcd computation. The new operator computes the intersection of projection factor sets produced by different CAD projection orders. In other words, it computes the gcd of projection polynomials in the same variables produced by different CAD projection orders. In some sense, the polynomial in the projection factor sets of $\Hproj$ is irrelevant to the projection orders. We prove that the new operator still guarantees obtaining at least one sample point from every connected component of the highest dimension, and therefore, can be used for testing semi-definiteness of polynomials. In many cases, the new operator produces smaller projection factor sets and thus fewer open cells. Some examples of testing semi-definiteness of polynomials, which are difficult to be solved by existing tools, have been worked out efficiently by our program \TwoPro\ based on the new operator.

On the other hand, the complexity of the new algorithm \TwoPro\ is still doubly exponential and thus, it cannot be expected that \TwoPro\ always works more efficient than typical CAD methods.

\section{Acknowledgements}
The work was supported by National Science Foundation of China Grants 11290141 and 11271034 and the SKLCS project SYSKF1207.

The authors would like to convey their gratitude to Hoon Hong who provided his valuable comments, advice and suggestion on this paper when he visited Peking University. Thanks also go to M. Safey El Din who provided us several examples and communicated with us on the usage of \RAGlib.

The authors would like to convey their gratitude to all the four referees who provided their valuable comments, advice and suggestion, which help improve this paper greatly on not only the presentation but also the technical details.

\bibliographystyle{abbrv}

\end{document}